\newtheoremstyle{mytheoremstyle}{3pt}{3pt}{\itshape}{}{\bf}{.}{.3em}{} 
\theoremstyle{mytheoremstyle}
\newtheorem{theorem}{Theorem}
\newcommand\nc\newcommand
\nc\bfa{{\boldsymbol a}}\nc\bfA{{\boldsymbol A}}\nc\cA{{\mathscr A}}
\nc\bfb{{\boldsymbol b}}\nc\bfB{{\boldsymbol B}}\nc\cB{{\mathscr B}}
\nc\bfc{{\boldsymbol c}}\nc\bfC{{\boldsymbol C}}\nc\cC{{\mathscr C}}
\nc\bfd{{\boldsymbol d}}\nc\bfD{{\boldsymbol D}}\nc\cD{{\mathscr D}}
\nc\bfe{{\boldsymbol e}}\nc\bfE{{\boldsymbol E}}\nc\cE{{\mathscr E}}
\nc\bff{{\boldsymbol f}}\nc\bfF{{\boldsymbol F}}\nc\cF{{\mathscr F}}
\nc\bfg{{\boldsymbol g}}\nc\bfG{{\boldsymbol G}}\nc\cG{{\mathscr G}}
\nc\bfh{{\boldsymbol h}}\nc\bfH{{\boldsymbol H}}\nc\cH{{\mathscr H}}
\nc\bfi{{\boldsymbol i}}\nc\bfI{{\boldsymbol I}}\nc\cI{{\mathcal I}}
\nc\bfj{{\boldsymbol j}}\nc\bfJ{{\boldsymbol J}}\nc\cJ{{\mathscr J}}
\nc\bfk{{\boldsymbol k}}\nc\bfK{{\boldsymbol K}}\nc\cK{{\mathscr K}}
\nc\bfl{{\boldsymbol l}}\nc\bfL{{\boldsymbol L}}\nc\cL{{\mathscr L}}
\nc\bfm{{\boldsymbol m}}\nc\bfM{{\boldsymbol M}}\nc{\cM}{{\mathscr M}}
\nc\bfn{{\boldsymbol n}}\nc\bfN{{\boldsymbol N}}\nc\cN{{\mathscr N}}
\nc\bfo{{\boldsymbol o}}\nc\bfO{{\boldsymbol O}}\nc\cO{{\mathscr O}}
\nc\bfp{{\boldsymbol p}}\nc\bfP{{\boldsymbol P}}\nc\cP{{\mathscr P}}\nc\eP{{\EuScriptP}}\nc\fP{{\mathfrak P}}
\nc\bfq{{\boldsymbol q}}\nc\bfQ{{\boldsymbol Q}}\nc\cQ{{\mathscr Q}}
\nc\bfr{{\boldsymbol r}}\nc\bfR{{\boldsymbol R}}\nc\cR{{\mathscr R}}
\nc\bfs{{\boldsymbol s}}\nc\bfS{{\boldsymbol S}}\nc\cS{{\mathscr S}}
\nc\bft{{\boldsymbol t}}\nc\bfT{{\boldsymbol T}}\nc\cT{{\mathscr T}}
\nc\bfu{{\boldsymbol u}}\nc\bfU{{\boldsymbol U}}\nc\cU{{\mathscr U}}
\nc\bfv{{\boldsymbol v}}\nc\bfV{{\boldsymbol V}}\nc\cV{{\mathscr V}}
\nc\bfw{{\boldsymbol w}}\nc\bfW{{\boldsymbol W}}\nc\cW{{\mathscr W}}
\nc\bfx{{\boldsymbol x}}\nc\bfX{{\boldsymbol X}}\nc\cX{{\mathscr X}}
\nc\bfy{{\boldsymbol y}}\nc\bfY{{\boldsymbol Y}}\nc\cY{{\mathscr Y}}
\nc\bfz{{\boldsymbol z}}\nc\bfZ{{\boldsymbol Z}}\nc\cZ{{\mathscr Z}}
\newtheorem{lemma}[theorem]{Lemma}
\newtheorem{proposition}[theorem]{Proposition}
\newtheorem{corollary}[theorem]{Corollary}
\newtheorem{definition}{Definition}
\newtheorem{construction}{Construction}[section]
\theoremstyle{remark}
\newtheorem{remark}{Remark}[section]
\DeclareMathOperator{\rank}{rank}
\DeclareMathOperator{\Span}{Span}
\newcommand{\ff}{{\mathbb F}}
\begin{document}
	
	\title{A lower bound on the field size of convolutional codes with a maximum distance profile and an improved construction}
	
	\author{\IEEEauthorblockN{Zitan Chen}}
	\maketitle	
	
	{\renewcommand{\thefootnote}{}\footnotetext{
			
			\vspace{-.2in}
			
			\noindent\rule{1.5in}{.4pt}

			{	
				The author is with the School of Science and Engineering, Future Networks of Intelligence Institute, The Chinese University of Hong Kong, Shenzhen, China. Email: chenztan@cuhk.edu.cn
				
				This work was supported in part by 
				the Basic Research Project of Hetao Shenzhen-Hong Kong Science and Technology Cooperation Zone under Project HZQB-KCZYZ-2021067, 
				the Guangdong Provincial Key Laboratory of Future Network of Intelligence under Project 2022B1212010001, 
				the National Natural Science Foundation of China under grant 62201487,
				and the Shenzhen Science and Technology Program (Grant No. RCBS20221008093108032).
			}
		}
	}
	\renewcommand{\thefootnote}{\arabic{footnote}}
	\setcounter{footnote}{0}

	\begin{abstract} Convolutional codes with a maximum distance profile attain the largest possible column distances for the maximum number of time instants and thus have outstanding error-correcting capability especially for streaming applications. Explicit constructions of such codes are scarce in the literature. In particular, known constructions of convolutional codes with rate \(k/n\) and a maximum distance profile require a field of size at least exponential in \(n\) for general code parameters. At the same time, the only known lower bound on the field size is the trivial bound that is linear in \(n\). In this paper, we show that a finite field of size \(\Omega_L(n^{L-1})\) is necessary for constructing convolutional codes with rate \(k/n\) and a maximum distance profile of length \(L\). As a direct consequence, this rules out the possibility of constructing convolutional codes with a maximum distance profile of length \(L\geq 3\) over a finite field of size \(O(n)\).
		
	Additionally, we also present an explicit construction of convolutional code with rate $k/n$ and a maximum profile of length $L=1$ over a finite field of size $O(n^{\min\{k,n-k\}})$, achieving a smaller field size than known constructions with the same profile length.
	\end{abstract}
	
	
	\section{Introduction}
	Convolutional coding is a time-dependent coding method that encodes the present information word with a small number of previous information words to generate the present codeword. An important family of convolutional codes are those with a maximum distance profile (MDP), also known as MDP convolutional codes, which have excellent error-correcting capability when used in streaming applications. To introduce MDP convolutional codes, let us begin with presenting a few basic concepts and results in convolutional coding from a module-theoretic perspective. 
	
	Let $\ff$ be a finite field and $\ff[D]$ be the ring of polynomials with indeterminate $D$ and coefficients in $\ff$. Let $n>k>0$ be integers. 
	
	\begin{definition}
		An $(n,k)$ convolutional code over $\ff$ is an $\ff[D]$-submodule $\cC\subset \ff[D]^n$ of rank $k$. A $k\times n$ polynomial matrix $G(D) = (g_{ij}(D))$ over $\ff[D]$ is called a generator matrix of the code $\cC$ if 
		\begin{align*}
			\cC= \{u(D)G(D)\mid u(D)\in \ff[D]^k\}.
		\end{align*}
		An $(n-k)\times n$ polynomial matrix $H(D)=(h_{ij}(D))$ over $\ff[D]$ is called a parity check matrix of the code $\cC$ if 
		\begin{align*}
			\cC = \{v(D)\in\ff[D]^n\mid H(D)v(D)=0\}.
		\end{align*}
	\end{definition} 
	
	A $k\times n$ polynomial matrix $G(D)$ is called \emph{basic} if it has a polynomial right inverse. The \emph{constraint length for the $i$th input} or the \emph{$i$th row degree} of the matrix $G(D)$ is defined to be 
	\begin{align*}
		\nu_i=\max_{1\leq j\leq n}\{\deg g_{ij}(D)\}, \quad i=1,\ldots,k.
	\end{align*}
	Furthermore, the \emph{memory} $m$ of the matrix $G(D)$ is defined as 
	\begin{align*}
		m=\max_{1\leq i\leq k}\{\nu_i\}.
	\end{align*}
	The \emph{overall constraint length} of the matrix $G(D)$ is defined to be 
	\begin{align*}
		\nu=\sum_{i=1}^{k}\nu_i.
	\end{align*}
	
	A generator matrix $G(D)$ of an $(n,k)$ convolutional code $\cC$ is called \emph{minimal} or \emph{reduced} if its overall constraint length is minimal over all generator matrices of the code $\cC$.
	This minimum overall constraint length is called the \emph{degree} $\delta$ of the code $\cC$, i.e.,
	\begin{align*}
		\delta = \min\{\nu \mid G(D)\text{ is a generator matrix of }\cC\}.
	\end{align*}
	We shall call an $(n,k)$ convolutional code with degree $\delta$ an $(n,k,\delta)$ convolutional code. 
	
	The degree $\delta$ is arguably one of the most fundamental parameters of a convolutional code as it stipulates the smallest number of memory elements needed to realize the code and is closely related to the decoding complexity \cite{johannesson2015fundamentals}. 
	One may resort to the following lemma, which provides a necessary and sufficient condition for a generator matrix to be minimal, to determine the degree of a convolutional code.
	\begin{lemma}[\cite{mceliece1993general}]\label{le:minimal-matrices}
		Let $G(D)$ be a $k\times n$ matrix over $\ff[D]$ and define the matrix of the highest order coefficients for $G(D)$, denoted by $\bar{G}=(\bar{G}_{ij})$, by
		\begin{align*}
			\bar{G}_{ij}=\mathrm{coeff}_{D^{\nu_i}}g_{ij}(D),
		\end{align*} where $\mathrm{coeff}_{D^{\nu_i}}g_{ij}(D)$ denotes the coefficient of $D^{\nu_i}$ in the polynomial $g_{ij}(D)$. Then $G(D)$ is minimal if and only if $\bar{G}$ has rank $k$.
	\end{lemma}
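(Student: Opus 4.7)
The plan is to prove the two implications separately, with the \emph{predictable degree property} as the central tool. This property asserts that if $\bar{G}$ has rank $k$, then for every nonzero $u(D)\in\ff[D]^k$,
\begin{align*}
\deg(u(D)G(D))=\max_{i}\bigl(\deg u_i(D)+\nu_i\bigr).
\end{align*}
I would establish it by extracting the coefficient of $D^{\max_i(\deg u_i+\nu_i)}$ in $u(D)G(D)$ and observing that it is an $\ff$-linear combination of the rows of $\bar{G}$ whose weights are the leading coefficients of those $u_i(D)$ attaining the maximum. Full row-rank of $\bar{G}$ forces this combination to be nonzero, which proves the equality (the inequality $\leq$ is immediate).

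For \textbf{sufficiency}, I would use that any two generator matrices of $\cC$ are related by a $k\times k$ unimodular matrix over $\ff[D]$: since $\ff[D]$ is a principal ideal domain, $\cC\subset\ff[D]^n$ is free of rank $k$, and any $k$-element generating set of a free rank-$k$ module over a PID is a basis, so two such sets differ by a unimodular change of basis. Writing $G'(D)=U(D)G(D)$ with $\det U(D)$ a nonzero constant, the predictable degree property applied row-wise gives the $i$-th row degree of $G'(D)$ as $\nu_i'=\max_{j}(\deg U_{ij}+\nu_j)$. Nonvanishing of $\det U$ guarantees, through the Leibniz expansion, the existence of a permutation $\sigma$ with $U_{i,\sigma(i)}\neq 0$ for every $i$, hence $\nu_i'\geq\deg U_{i,\sigma(i)}+\nu_{\sigma(i)}\geq\nu_{\sigma(i)}$. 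Summing over $i$ yields $\sum_i\nu_i'\geq\sum_i\nu_i$, establishing minimality of $G(D)$.

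For \textbf{necessity}, I would argue contrapositively and explicitly construct a generator matrix of strictly smaller overall constraint length. Pick $c_1,\dots,c_k\in\ff$, not all zero, satisfying $\sum_i c_i \bar{G}_i=0$, where $\bar{G}_i$ denotes the $i$-th row of $\bar{G}$; let $i^\ast$ be an index maximising $\nu_i$ among those with $c_i\neq 0$; and normalise $c_{i^\ast}=1$. Replace the $i^\ast$-th row of $G(D)$ by $\sum_{i}c_i D^{\nu_{i^\ast}-\nu_i}g_i(D)$, where $g_i(D)$ denotes the $i$-th row of $G(D)$. Interpreted as an elementary row operation (addition of an $\ff[D]$-linear combination of the other rows to row $i^\ast$), this replacement yields a matrix that still generates $\cC$. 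Each coordinate of the new row has degree at most $\nu_{i^\ast}$, and the chosen dependency forces the coefficient of $D^{\nu_{i^\ast}}$ to vanish coordinate-wise, so the new $i^\ast$-th row degree is strictly less than $\nu_{i^\ast}$ while the remaining row degrees are unchanged. The overall constraint length strictly decreases, contradicting the minimality of $G(D)$.

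The hard part is the sufficiency direction, where the per-row degree comparison must be aggregated into an inequality on the sums of row degrees. The argument hinges on extracting a single permutation from the nonvanishing of $\det U$ and combining it with the predictable degree property; this is where the PID structure of $\ff[D]$ is used in an essential way. The necessity direction, by contrast, reduces to a single carefully chosen elementary row operation and is essentially bookkeeping.
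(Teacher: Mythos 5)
Your proof is correct. The paper does not prove this lemma at all (it cites it from McEliece), and your argument is the standard one: the predictable degree property for matrices with full row rank of $\bar{G}$, unimodular equivalence of $k\times n$ generator matrices over the PID $\ff[D]$ combined with a permutation extracted from $\det U\neq 0$ for sufficiency, and an explicit degree-reducing elementary row operation for necessity. The only point worth making explicit in the necessity step is that the replacement row $\sum_i c_i D^{\nu_{i^\ast}-\nu_i}g_i(D)$ cannot vanish identically, since the rows of $G(D)$ are linearly independent over the field of rational functions (as $G(D)$ generates a rank-$k$ code); with that observation the overall constraint length strictly decreases and your contradiction goes through.
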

	
	Two basic distance measures can be used to assess the error-correcting capability of convolutional codes: the \emph{free distance} and the \emph{column distances}.
	Note that for $v(D)\in \ff[D]^n$, we may write $v(D)=\sum_{j\in\mathbb{N}}v_j D^j\in \ff^n[D]$. Define the \emph{Hamming weight} of $v(D)$ to be 
	\begin{align*}
		\mathrm{wt}(v(D)) = \sum_{j\in\mathbb{N}}\mathrm{wt}(v_j),
	\end{align*} where $\mathrm{wt}(v_j)$ denotes the Hamming weight of the length-$n$ vector $v_j$ over $\ff$. The {free distance} of a convolutional code $\cC$ is then defined to be 
	\begin{align*}
		d(\cC)=\min\{\mathrm{wt}(v(D))\mid v(D)\in\cC, v(D)\neq 0\}.
	\end{align*}
	It is shown in \cite{rosenthal1999maximum} that the free distance of an $(n,k,\delta)$ convolutional code satisfies the following upper bound that generalizes the Singleton bound for block codes:
	\begin{align}
		d(\cC)\leq (n-k)\left( \left\lfloor\frac{\delta}{k}\right\rfloor+1\right)+\delta+1.\label{eq:sb}
	\end{align}
	Convolutional codes whose free distance attains \eqref{eq:sb} with equality are called maximum distance separable (MDS) convolutional codes.
	
	To define the column distances, let us first introduce the notion of \emph{truncated generator matrices}. To this end, let $G(D)=\sum_{i=0}^{m}G_iD^i$ be a generator matrix of memory $m$ for an $(n,k)$ convolutional code $\cC$. 
	Let $j\geq 0$ be an integer. Then the \emph{$j$th truncated generator matrix} is defined to be
	\begin{align*}
		G_j^c=\begin{pmatrix}
			G_0 & G_1 & \cdots & G_j\\
			& G_0 & \cdots & G_{j-1}\\
			&	& \ddots & \vdots\\
			&	& 		 & G_0
		\end{pmatrix},
	\end{align*} where $G_i=0$ for $i>m$ and the empty blocks denote zero matrices. Similarly, let $H(D)$ be a parity check matrix of memory $m'$ for the code $\cC$. The \emph{$j$th truncated parity check matrix} is given by
	\begin{align*}
		H_j^c=\begin{pmatrix}
			H_0	&	   & 		 & \\
			H_1	& H_0  & 		 & \\
			\vdots	& \vdots & \ddots & \\
			H_j & H_{j-1} & \cdots & H_0
		\end{pmatrix}, 
	\end{align*} where $H_i=0$ for $i>m'$.
	With the $j$th truncated generator matrix, the $j$th column distance of a convolutional code can be defined as follows.
	\begin{definition}
		Let $G(D)=\sum_{i=0}^{m}G_iD^i$ be a generator matrix of memory $m$ for an $(n,k)$ convolutional code $\cC$ such that $G_0$ has full rank. For $j\geq 0$, the $j$th column distance of the code $\cC$ is given by 
		\begin{align*}
			d_j^c(\cC)=\min_{\substack{u_i\in\ff^k,i=0,\ldots,j,\\u_0\neq 0}}\{\mathrm{wt}\big((u_0,\ldots,u_j) G_j^c\big)\}.
		\end{align*}
	\end{definition}
	\begin{theorem}[\cite{gluesing2006strongly}]\label{thm:sb-c}
		Let $\cC$ be an $(n,k)$ convolutional code. For all $j\geq 0$ the column distance satisfies 
		\begin{align}
			d_j^c(\cC)\leq (n-k)(j+1)+1.\label{eq:sb-c}
		\end{align} 
		Moreover, equality for a given $j$ implies that all the previous distances $d_i^c(\cC),i\leq j$ also attain their versions of the bound \eqref{eq:sb-c} with equality.
	\end{theorem}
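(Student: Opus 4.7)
The plan is to prove both the bound and the ``moreover'' assertion by induction on $j$, combining the Singleton bound for the base case with a covering-radius argument for the inductive step. For $j = 0$, the definition gives $d_0^c(\cC) = \min_{u_0 \neq 0}\mathrm{wt}(u_0 G_0)$, which is exactly the minimum distance of the $[n,k]$ block code generated by $G_0$ (full row rank by hypothesis). The classical Singleton bound then immediately yields $d_0^c(\cC) \leq n - k + 1$, establishing \eqref{eq:sb-c} at $j = 0$.

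For the inductive step I would prove the recursion $d_j^c(\cC) \leq d_{j-1}^c(\cC) + (n-k)$. Pick an input $(u_0,\ldots,u_{j-1})$ with $u_0 \neq 0$ whose truncated output $(u_0,\ldots,u_{j-1})G_{j-1}^c$ achieves weight $d_{j-1}^c(\cC)$, and extend it by appending a new block $u_j\in\ff^k$. The new trailing output is $v_j = w + u_j G_0$, where $w := u_0 G_j + u_1 G_{j-1} + \cdots + u_{j-1} G_1$ is fixed by the earlier inputs (with the convention $G_i = 0$ for $i > m$). Since $G_0$ has rank $k$, the image $\{u_j G_0 : u_j\in\ff^k\}$ is an $[n,k]$ block code, and any such code has covering radius at most $n-k$: one writes any given syndrome as a linear combination of $n-k$ linearly independent columns of a full-rank parity check matrix, producing a coset representative of weight at most $n-k$. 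Choosing $u_j$ accordingly yields $\mathrm{wt}(v_j)\leq n-k$, hence a truncated output of weight at most $d_{j-1}^c(\cC) + (n-k)$ with $u_0\neq 0$. Unrolling from the base case gives $d_j^c(\cC) \leq (n-k+1) + j(n-k) = (n-k)(j+1) + 1$.

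The ``moreover'' assertion then follows by reading the same recursion backward: if $d_j^c(\cC) = (n-k)(j+1) + 1$, then $d_{j-1}^c(\cC) \geq d_j^c(\cC) - (n-k) = (n-k)j + 1$, which combined with \eqref{eq:sb-c} at index $j-1$ forces equality $d_{j-1}^c(\cC) = (n-k)j + 1$; a downward induction propagates equality to every $d_i^c(\cC)$ with $i\leq j$. The only conceptual ingredient is the covering-radius bound used in the inductive step, and I do not anticipate any serious obstacle; the rest is bookkeeping.
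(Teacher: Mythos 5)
Your proof is correct: the covering-radius recursion $d_j^c(\cC)\leq d_{j-1}^c(\cC)+(n-k)$, seeded by the Singleton bound at $j=0$ and read backwards for the equality statement, is exactly the standard argument; the paper itself states this theorem as a cited result from \cite{gluesing2006strongly} without reproving it, and the proof there proceeds in essentially the same way (choosing later input blocks so that each new output block is a coset leader of the code generated by the full-rank $G_0$). No gaps — your handling of the block-triangular structure of $G_j^c$ and the preservation of $u_0\neq 0$ is sound.
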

	
	Convolutional codes whose column distances attain the bound \eqref{eq:sb-c} for the largest possible number of time instants are called MDP convolutional codes. Another closely related family of convolutional codes are those codes whose column distances reach the bound \eqref{eq:sb} at the earliest possible time instant, termed strongly-MDS convolutional codes. These two types of distance-optimal convolutional codes are formally defined as follows.
	
	\begin{definition}[\cite{hutchinson2005convolutional}, \cite{gluesing2006strongly}]\label{def:sMDS-MDP}
		Let $\cC$ be an $(n,k,\delta)$ convolutional code. Let $M=\lfloor\frac{\delta}{k}\rfloor+\lceil\frac{\delta}{n-k}\rceil$ and $L=\lfloor\frac{\delta}{k}\rfloor+\lfloor\frac{\delta}{n-k}\rfloor$.
		\begin{enumerate}
			\item The code $\cC$ is said to be strongly-MDS if
			\begin{align*}
				d_M^c(\cC)=(n-k)\left(\left\lfloor\frac{\delta}{k}\right\rfloor+1\right)+\delta+1.
			\end{align*}
			\item The code $\cC$ is said to be MDP if 
			\begin{align*}
				d_L^c(\cC)=(n-k)(L+1)+1.
			\end{align*}
		\end{enumerate}
	\end{definition}
	Clearly, strongly-MDS convolutional codes are also MDS by definition. Although the column distance of strongly-MDS convolutional codes attains the bound \eqref{eq:sb} at the earliest possible time instant $M$, their column distance at the time instant prior to $M$ may not attain the bound \eqref{eq:sb-c}. On the contrary, the column distances of MDP convolutional codes attain the bound \eqref{eq:sb-c} for the maximum number of time instants, i.e., $L+1$ time instants, but their free distance may not attain the bound \eqref{eq:sb}. As we will refer to the parameter $L$ frequently in the sequel, let us call $L$ the \emph{maximum profile length}.
	
	Before proceeding to discuss prior work on MDP convolutional codes and the contributions of this work, let us mention a characterization of 
	optimal column distances by the determinants of full-size square submatrices of $G_j^c$ and $H_j^c$.
	
	\begin{theorem}[\cite{gluesing2006strongly}]\label{thm:cd}
		Let $G(D)$ be a $k\times n$ basic and minimal generator matrix of an $(n,k)$ convolutional code $\cC$ and let $H(D)$ be a $(n-k)\times n$ basic parity check matrix of the code. 
		Then the following are equivalent:
		\begin{enumerate}
			\item $d_j^c(\cC)=(n-k)(j+1)+1$;
			\item every $k(j+1)\times k(j+1)$ full-size minor of $G_j^c$ formed by the columns with indices $1\leq t_1 < \ldots < t_{k(j+1)}$ where $t_{ks+1}\geq ns+1$ for $s=1,\ldots,j$ is nonzero;
			\item every $(n-k)(j+1)\times (n-k)(j+1)$ full-size minor of $H_j^c$ formed by the columns with indices $1\leq t_1<\ldots <t_{(n-k)(j+1)}$ where $t_{ks}\leq ns$ for $s=1,\ldots,j$ is nonzero.
		\end{enumerate}
	\end{theorem}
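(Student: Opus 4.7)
The plan is to establish the three-way equivalence by showing $(1)\Leftrightarrow(2)$ and $(1)\Leftrightarrow(3)$ with parallel arguments. The pivot is that, via the identification $v=(u_0,\ldots,u_j)G_j^c$ (where $v_0\neq 0 \Leftrightarrow u_0\neq 0$ because $G_0$ has full rank), the statement $d_j^c(\cC)=(n-k)(j+1)+1$ is equivalent to the non-existence of a truncated codeword $(v_0,\ldots,v_j)$ with $v_0\neq 0$ and Hamming weight at most $(n-k)(j+1)$. I translate this into a non-singularity statement on square submatrices and identify the structural conditions on the chosen column set under which non-singularity is not automatically precluded.

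For $(1)\Leftrightarrow(2)$, I exploit the block upper-triangular shape of $G_j^c$ (block $(i,s)=G_{s-i}$ for $s\ge i$, else $0$). A rank count on the tail block rows $s,\ldots,j$, which vanish on the columns in $T\cap\{1,\ldots,sn\}$, shows that $G_j^c|_T$ is forcibly singular whenever $|T\cap\{1,\ldots,sn\}|>ks$ for some $s$; hence the condition $t_{ks+1}\ge ns+1$ is precisely the necessary structural requirement for $G_j^c|_T$ to possibly be non-singular. A codeword of weight at most $(n-k)(j+1)$ with $u_0\neq 0$ then corresponds to a nonzero left-kernel vector of $G_j^c|_T$ for some $T$ of size $k(j+1)$ inside its zero set.

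For $(2)\Rightarrow(1)$ I induct on $j$. The key propagation is that $(2)$ at level $j$ implies $(2)$ at level $j-1$: adjoining any $k$-subset of $\{jn+1,\ldots,(j+1)n\}$ to a $T'$ satisfying the condition for $G_{j-1}^c$ produces a $T$ satisfying the condition for $G_j^c$, and the resulting $G_j^c|_T$ is block upper-triangular with diagonal blocks $G_{j-1}^c|_{T'}$ and a $k\times k$ minor of $G_0$; non-singularity of $G_j^c|_T$ is therefore equivalent to non-singularity of both. With the induction hypothesis giving $d_{s-1}^c=(n-k)s+1$ for $s\le j$, any purported low-weight codeword with $u_0\neq 0$ has zero set $Z$ satisfying $|Z\cap\{1,\ldots,sn\}|\le ks-1$, which allows a greedy construction of $T\subseteq Z$ meeting the index condition and contradicts $(2)$. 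For $(1)\Rightarrow(2)$, I invoke the propagation clause of Theorem~\ref{thm:sb-c} to get $d_i^c=(n-k)(i+1)+1$ for all $i\le j$; if some $T$ satisfying the index condition yielded a singular $G_j^c|_T$, letting $i^\star$ be the smallest index with $u_{i^\star}\neq 0$ in a left-kernel vector, the shifted vector $(u_{i^\star},\ldots,u_j)$ would lie in the left kernel of $G_{j-i^\star}^c$ on $T^{(i^\star)}:=\{t-i^\star n:t\in T,\, t>i^\star n\}$, which by the index condition has size at least $k(j-i^\star+1)$, contradicting the optimality of $d_{j-i^\star}^c$.

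The equivalence $(1)\Leftrightarrow(3)$ runs dually: the block lower-triangular shape of $H_j^c$ pins down its forcibly singular submatrices, and the existence of a low-weight codeword with support in $S$ is equivalent to $H_j^c|_S$ having a nontrivial right kernel, with the index condition in $(3)$ ruling out the structurally enforced singularities. The main technical obstacle is the careful bookkeeping of the staircase structures to read off the exact index conditions, and resolving the apparent circularity between the equivalence at time $j$ and those at earlier times; both are handled by the induction on $j$ together with the propagation clause of Theorem~\ref{thm:sb-c}.
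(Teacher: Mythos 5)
Note first that the paper itself does not prove Theorem~\ref{thm:cd}; it is imported from \cite{gluesing2006strongly}. Your outline follows essentially the same route as the proof in that reference: use the staircase structure to identify which full-size minors are not trivially zero, translate $d_j^c(\cC)=(n-k)(j+1)+1$ into the non-existence of truncated codewords with $v_0\neq 0$ and weight at most $(n-k)(j+1)$, and pass back and forth between low-weight truncated codewords and singular admissible submatrices. Your $(1)\Leftrightarrow(2)$ argument is complete in all essentials: the zero-set count $|Z\cap\{1,\ldots,sn\}|\leq ks-1$ from the induction hypothesis, and the shift-by-$i^\star$ argument combined with the propagation clause of Theorem~\ref{thm:sb-c}, are exactly the right mechanisms.

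Two points on the $H$-side need to be nailed down. First, ``runs dually'' conceals the one ingredient that is not formal dualization: you must know that the right kernel of $H_j^c$ consists precisely of truncations of codewords. This is where basicness of $H(D)$ enters: a polynomial right inverse forces $H_0$ to have full row rank, hence $\rank H_j^c=(n-k)(j+1)$; since every row of $G_j^c$ (transposed) lies in the right kernel of $H_j^c$, a dimension count gives that the kernel equals the row space of $G_j^c$. Without this, a singular admissible submatrix of $H_j^c$ produces only a kernel vector, not a codeword, and $(1)\Rightarrow(3)$ does not close; you also need the delay-free property ($G_0$ of full rank) to shift a kernel vector whose first nonzero block is $v_i$ into a competitor for $d_{j-i}^c$, i.e., the $H$-side analogue of your $i^\star$ step. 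Second, your reading of the index condition in $(3)$ as ``exactly ruling out the structurally enforced singularities'' leads to the requirement $|S\cap\{1,\ldots,ns\}|\geq(n-k)s$, i.e., $t_{(n-k)s}\leq ns$, and this is indeed the condition your argument proves and the one consistent with the block structure of $H_j^c$. The condition $t_{ks}\leq ns$ as printed in the statement coincides with it only when $k=n-k$: for instance, with $n=3$, $k=1$, $j=1$, the column set $\{1,4,5,6\}$ satisfies $t_{k}\leq n$, yet the corresponding $4\times 4$ minor of $H_1^c$ is identically zero (columns $4,5,6$ are supported on only $n-k=2$ rows), so the printed condition cannot be the intended one. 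In short, your approach is the standard and correct one; to make it a proof you should state and use the kernel identification for $H_j^c$ (this is exactly where the basicness hypothesis is consumed, as the paper itself remarks after Theorem~\ref{thm:cd} when justifying the puncturing step), and be aware that Item~3 should be read with $t_{(n-k)s}\leq ns$.
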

	
	Theorem~\ref{thm:cd} suggests that one way of constructing MDP convolutional codes is to design a minimal and basic generator matrix such that the corresponding truncated matrix has the property described in Item~2 of Theorem~\ref{thm:cd}. For convenience, we shall call Item~2 of Theorem~\ref{thm:cd} the \emph{MDP property} of $G_j^c$, and Item~3 of Theorem~\ref{thm:cd} the \emph{MDP property} of $H_j^c$.
	Note that the MDP property of $G_j^c$ implies that for {a convolutional} code to have optimal $j$th column distance, the full-size square submatrices that should have nonzero determinants are only those $k(j+1)\times k(j+1)$ submatrices formed by \emph{at most} $ks$ columns of the \emph{first} $ns$ columns of $G_j^c$ for $s=1,\ldots,j$.\footnote{Equivalently, the full-size square submatrices that should have nonzero determinants are only those $k(j+1)\times k(j+1)$ submatrices formed by \emph{at least} $ks$ columns of the \emph{last} $ns$ columns of $G_j^c$ for $s=1,\ldots,j$.} In fact, the determinant of any other full-size square submatrix of $G_j^c$ is always zero.
	
	So the problem of constructing MDP convolutional codes boils down to constructing a minimal and basic generator matrix with the MDP property. By means of Lemma~\ref{le:minimal-matrices}, minimal generator matrices can be constructed without much effort. However, it is in general not obvious to construct basic generator matrices. Nevertheless, it is observed that if certain divisibility condition holds then the assumption of being basic in Theorem~\ref{thm:cd} can be replaced by a milder one.
	
	\begin{lemma}[\cite{alfarano2020left}]\label{le:mdp-generic}
		Let $\delta$ be an integer such that $k\mid \delta$
		and let $G(D)$ be a $k\times n$ minimal generator matrix with row degree $\nu_i=\delta/k$. If $G_L^c$ has the MDP property then $G(D)$ is basic and it generates an $(n,k,\delta)$ MDP convolutional code.
	\end{lemma}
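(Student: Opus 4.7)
The plan is to verify three items in order: (i) $\cC := \{u(D) G(D) \mid u(D) \in \ff[D]^k\}$ has degree exactly $\delta$; (ii) $G(D)$ is basic; (iii) $\cC$ is MDP. Item (i) is immediate: minimality of $G$ together with every row degree equaling $m := \delta/k$ forces the overall constraint length to be $km = \delta$, so $\cC$ has degree $\delta$. Item (iii) will follow from Theorem~\ref{thm:cd} once (ii) is known, so the real content is establishing (ii).

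For (ii) I will argue by contradiction. Assume $G$ is not basic, so there exist $\alpha \in \bar{\ff}$ and $u \in \bar{\ff}^k \setminus \{0\}$ with $u G(\alpha) = 0$, and write $u G(D) = (D-\alpha) w(D)$ with $w(D) = \sum_{\ell=0}^{m-1} w_\ell D^\ell \in \bar{\ff}[D]^n$. First I will rule out $\alpha = 0$: any $k\times k$ submatrix of $G_0$ can be padded, by adjoining the first $k$ columns of each of blocks $1,\ldots,L$, into a $k(L+1)\times k(L+1)$ submatrix of $G_L^c$ that is block upper-triangular with the chosen submatrix as the $(0,0)$ block and the first $k$ columns of $G_0$ on each subsequent diagonal block; since the chosen column indices satisfy $t_{ks+1} = ns+1$, the MDP hypothesis forces both diagonal determinants to be nonzero, so every $k\times k$ minor of $G_0$ is nonzero and $G_0 = G(0)$ has full row rank.

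Next I will take the candidate vector
\[
\tilde{u} \;=\; (\alpha^L u,\;\alpha^{L-1} u,\;\ldots,\;\alpha u,\;u) \in \bar{\ff}^{k(L+1)},
\]
and compute $(\tilde{u}\,G_L^c)^{(j)} = \alpha^{L-j}\sum_{\ell=0}^{\min(j,m)}\alpha^\ell u G_\ell$. For $j\geq m$ this equals $\alpha^{L-j} u G(\alpha) = 0$; for $j<m$, substituting $uG_0 = -\alpha w_0$ and $uG_\ell = w_{\ell-1}-\alpha w_\ell$ makes the sum telescope to $-\alpha^{j+1} w_j$, giving $(\tilde{u}\,G_L^c)^{(j)} = -\alpha^{L+1} w_j$. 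Hence $\tilde{u}\,G_L^c$ is supported only on the first $m$ block columns; its last $n(L-m+1)$ coordinates vanish identically.

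The hardest step, which I expect to be the main obstacle, will be to exhibit a valid MDP column index set $S$ of size $k(L+1)$ lying entirely in those $n(L-m+1)$ zero coordinates, so that $\tilde{u}\,G_L^c$ vanishes on $S$ and the corresponding $k(L+1)\times k(L+1)$ submatrix of $G_L^c$ is singular, contradicting Item~2 of Theorem~\ref{thm:cd}. Describing $S$ by its block counts $(c_0,\ldots,c_L)$, the MDP condition $t_{ks+1}\geq ns+1$ is equivalent to the partial-sum bound $c_0+\cdots+c_{s-1}\leq ks$ for $s=1,\ldots,L$. I will set $c_s = 0$ for $s<m$ and distribute the remaining $k(L+1)$ picks across blocks $m,\ldots,L$ subject to $c_s\leq n$; the partial-sum bound then holds automatically because $\sum_{s\geq m}(c_s-k) = km$. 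Feasibility of this distribution reduces to $k(L+1)\leq n(L-m+1)$, and writing $km = q(n-k)+r$ with $0\leq r<n-k$ together with $L = m+q$ yields $n(L-m+1) - k(L+1) = n-k-r \geq 1$, exactly the slack needed. Once $S$ is produced the contradiction is immediate, so $G$ must be basic, and Theorem~\ref{thm:cd} finishes the MDP conclusion.
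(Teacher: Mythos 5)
The paper does not prove this lemma at all---it is imported verbatim from \cite{alfarano2020left}---so there is no internal proof to compare against; what you have written is a self-contained argument, and after checking it I believe it is correct. Your three pillars all hold: (a) minimality plus equal row degrees gives degree $km=\delta$; (b) the block upper-triangular padding argument shows every $k\times k$ minor of $G_0$ is nonzero (so $\alpha=0$ is impossible and $G_0$ has full rank, as needed both for ruling out $\alpha=0$ and for the column-distance definition); (c) for $\alpha\neq 0$ the vector $\tilde{u}=(\alpha^L u,\ldots,u)$ indeed satisfies $(\tilde{u}G_L^c)^{(j)}=-\alpha^{L+1}w_j$ for $j<m$ and $0$ for $j\geq m$, and your counting $n(L-m+1)-k(L+1)=(n-k)-r\geq 1$ (with $km=q(n-k)+r$, $L=m+q$) is exactly right, so a full-size MDP-admissible minor supported on blocks $m,\ldots,L$ exists and is killed by $\tilde{u}$, contradicting the hypothesis. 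Two small points deserve tightening. First, you silently use the standard characterization that $G(D)$ is basic iff $G(\alpha)$ has rank $k$ for every $\alpha$ in the algebraic closure $\bar{\ff}$; this is classical but should be cited or proved. Second, your stated reason that the partial-sum bound ``holds automatically because $\sum_{s\geq m}(c_s-k)=km$'' is not by itself a proof---a total sum of $km$ does not bound the intermediate partial sums; the correct one-line justification is that $c_t\leq n$ for all $t$ and $c_t=0$ for $t<m$ give $\sum_{t=0}^{s-1}c_t\leq n(s-m)\leq ks$ for every $s\leq L$, the last inequality because $s\leq L\leq m+\frac{km}{n-k}=\frac{mn}{n-k}$ (equivalently, $(s-m)(n-k)\leq \lfloor\frac{km}{n-k}\rfloor(n-k)\leq km$). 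With that line inserted, every distribution you allow is admissible and the contradiction goes through; Theorem~\ref{thm:cd} (items 1 and 2, with $G$ now basic and minimal) then yields the MDP conclusion as you say.
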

	
	In view of Lemma~\ref{le:mdp-generic}, the task of constructing MDP convolutional codes in the case of $k\mid \delta$ simplifies to designing minimal generator matrices with the MDP property. Lastly, we note that, similar to the MDS property of block codes, a convolutional code has a maximum distance profile if and only if its dual has a maximum distance profile. This implies that to construct MDP convolutional codes with a given degree for all rates, it suffices to consider the codes with rate at most half.
	\begin{theorem}[\cite{gluesing2006strongly}]\label{thm:mdp-dual}
		An $(n,k,\delta)$ convolutional code $\cC$ over $\ff$ with generator matrix $G(D)$ is MDP if and only if its dual code $\cC^\perp$ with parity check matrix $G(D)$ is an $(n,n-k,\delta)$ MDP convolutional code over $\ff$.
	\end{theorem}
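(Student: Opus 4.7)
The plan is to deduce the claim from Theorem~\ref{thm:cd} by matching the MDP conditions for $\cC$ and $\cC^\perp$, which both reduce to nonvanishing of certain minors of matrices built from the block coefficients of $H(D)$ but arranged in different triangular layouts.

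First I would record the standard fact that the dual of an $(n,k,\delta)$ convolutional code is again an $(n,n-k,\delta)$ convolutional code (the degree is preserved under duality), so $L=\lfloor\delta/k\rfloor+\lfloor\delta/(n-k)\rfloor$ is symmetric in $k$ and $n-k$ and hence identical for $\cC$ and $\cC^\perp$. Next, I would apply Theorem~\ref{thm:cd} in two complementary directions. For $\cC$ with parity check matrix $H(D)$, Item~3 yields that $\cC$ is MDP iff every full-size $(n-k)(L+1)\times(n-k)(L+1)$ minor of the lower-block-triangular truncated parity check matrix $H_L^c$ on a column-index set $T$ satisfying the parity-check condition (at least $(n-k)s$ indices of $T$ lie in $\{1,\ldots,ns\}$ for each $s=1,\ldots,L$) is nonzero. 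For $\cC^\perp$, whose generator matrix is $H(D)$, Item~2 applied to $\cC^\perp$ says that $\cC^\perp$ is MDP iff every full-size $(n-k)(L+1)\times(n-k)(L+1)$ minor of the upper-block-triangular truncated generator matrix $\widetilde{H}_L^c$ (built from the same blocks $H_0,\ldots,H_m$ but in the generator layout) on a column set $S$ satisfying the generator condition (at most $(n-k)s$ indices of $S$ lie in $\{1,\ldots,ns\}$) is nonzero.

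The key observation is that $\widetilde{H}_L^c$ and $H_L^c$ are related by block reversals: $\widetilde{H}_L^c=P_{\mathrm{row}}H_L^c P_{\mathrm{col}}$, where $P_{\mathrm{row}},P_{\mathrm{col}}$ are permutation matrices reversing the $L+1$ row blocks and column blocks, respectively. Consequently, a full-size minor of $\widetilde{H}_L^c$ on columns $S$ equals, up to sign, the minor of $H_L^c$ on the block-reversed set $T=\mathrm{rev}(S)$. Since $\mathrm{rev}$ sends the first $s$ column blocks $\{1,\ldots,ns\}$ to the last $s$ column blocks $\{n(L+1-s)+1,\ldots,n(L+1)\}$, the generator condition $|S\cap\{1,\ldots,ns\}|\le(n-k)s$ becomes $|T\cap\{n(L+1-s)+1,\ldots,n(L+1)\}|\le(n-k)s$; complementing within $T$ (which has size $(n-k)(L+1)$) and substituting $s\mapsto L+1-s$ yields $|T\cap\{1,\ldots,ns\}|\ge(n-k)s$ for all $s=1,\ldots,L$, precisely the parity-check condition for $\cC$. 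The two families of minors therefore coincide.

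The main obstacle I anticipate is index bookkeeping---verifying the block-reversal identity between $\widetilde{H}_L^c$ and $H_L^c$ and carefully tracking how the column-subset conditions transform under $\mathrm{rev}$. An off-by-one slip or a mismatched block convention would break the correspondence; once verified, the equivalence of Items~2 and~3 of Theorem~\ref{thm:cd} applied to the two codes immediately gives the desired duality.
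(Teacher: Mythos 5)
Your proposal is correct in substance, and since the paper imports Theorem~\ref{thm:mdp-dual} from \cite{gluesing2006strongly} without giving a proof, there is no in-paper argument to compare against; what you wrote is essentially the standard derivation behind the cited result. The skeleton is right: take $H(D)$ to be a basic and minimal generator matrix of $\cC^\perp$ (same degree $\delta$, hence the same $L$ for both codes), note that the truncated generator matrix of $\cC^\perp$ built from $H_0,\ldots,H_L$ in the upper-triangular layout is exactly the block-row/block-column reversal of $H_L^c$ (the $(i,b)$ block of $H_L^c$ is $H_{i-b}$, so reversing both block orders gives $H_{b-i}$), and then check that this reversal carries the column-set family of Item~2 of Theorem~\ref{thm:cd} (applied to the $(n,n-k)$ code $\cC^\perp$) bijectively onto the column-set family of Item~3 (applied to $\cC$). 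Your counting for that correspondence, $|T\cap\{1,\ldots,ns\}| = (n-k)(L+1)-|S\cap\{1,\ldots,n(L+1-s)\}|\ge (n-k)s$, is correct, and the equivalence then follows from Theorem~\ref{thm:cd} applied twice.

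Two remarks. First, your restatement of Item~3 as ``at least $(n-k)s$ of the chosen indices lie in $\{1,\ldots,ns\}$,'' i.e.\ $t_{(n-k)s}\le ns$, is the correct condition (it describes precisely the full-size minors of the lower block-triangular $H_L^c$ that are not trivially zero, and it is what \cite{gluesing2006strongly} states), but it differs from the condition printed in the paper's Theorem~\ref{thm:cd}, which reads $t_{ks}\le ns$ --- apparently a typo. Your reversal argument genuinely needs the corrected form: with the printed condition the two families of minors would not match unless $k=n-k$, and for $k<n-k$ the printed condition would even demand nonvanishing of minors that are zero for structural reasons. Second, you silently use several facts that should be made explicit (all standard and citable rather than reproved): the dual of an $(n,k,\delta)$ code is an $(n,n-k,\delta)$ code (degree preservation under duality); $\cC^\perp$ admits a generator matrix $H(D)$ that is simultaneously basic and minimal and is at the same time a basic parity-check matrix of $\cC$, so that both directions of Theorem~\ref{thm:cd} are applicable to the same matrix; and basicness of $H(D)$ forces $H_0$ to have full rank (evaluate a polynomial right inverse at $D=0$), which is needed for the column distances of $\cC^\perp$ to be defined as in the paper. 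With those points spelled out, the argument closes.
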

	
	{Finally, we note that asymptotic notation is used throughout the text. Given two functions $f(n),g(n)$, we use $f(n)=O(g(n))$ and $g(n)=\Omega(f(n))$ to denote $\limsup_{n\to\infty}(f(n)/g(n))<\infty$, and write $f(n)=O_t(g(n))$ and $g(n)=\Omega_t(f(n))$ to emphasize the hidden multiplicative constants in the asymptotic notation depend on the parameter $t$.}
	
	\subsection{Related work}
	
	There are only a few explicit constructions of MDP convolutional codes in the literature. In \cite{gluesing2006strongly}, the authors presented the first explicit construction of MDP convolutional codes over a large finite field of large characteristic. The idea that is instrumental in this construction is to first construct lower triangular Toeplitz {superregular} matrices\footnote{Roughly speaking, a lower triangular Toeplitz matrix is called {superregular} if its minors that are not trivially zero are nonzero. 
	} and then utilize them as the building blocks for the construction of MDP convolutional codes. The same idea was also explored in \cite{almeida2013new} to construct another class of MDP convolutional codes over a large finite field of arbitrary characteristic. A third general class of explicit MDP convolutional codes was presented in \cite{alfarano2020weighted}, which uses generalized Reed-Solomon (RS) codes as the building blocks. Two families of MDP convolutional codes with $L=1$ were recently constructed in \cite{chen2022convolutional} and \cite{luo2023construction}, relying on techniques from the theory of skew polynomials \cite{lam1985general}, \cite{lam1988vandermonde} and linearized polynomials, respectively.
	The aforementioned families of codes are the only known algebraic constructions of MDP convolutional codes. The size of the finite field that suffices for constructing these codes with rate $R=k/n$ and degree $\delta$ are summarized in Table~\ref{tab:comp}.
	
	As for impossibility results, the only known lower bound on the field size that is necessary for the existence of $(n,k,\delta)$ MDP convolutional codes is the trivial bound $\Omega(n)$. Therefore, it remained possible that $(n,k,\delta)$ MDP convolutional codes could be constructed over a finite field of size $O(n)$ for general code parameters.
	
	\begin{table}[!t]
		\renewcommand{\arraystretch}{1.3}
		\centering
		\begin{tabular}{|c|c|c|}
			\hline
			Field size  & Maximum profile length & Reference\\
			\hline\hline
			$2^{O((R^{-1}\delta+n)^2)}$ & $L\geq 1$ & \cite{gluesing2006strongly}\\
			\hline
			$2^{O(2^{R^{-1}\delta+n})}$ & $L\geq 1$ & \cite{almeida2013new}\\
			\hline
			$2^{O((\frac{\delta^3}{Rn}+\delta Rn)\log n)}$ & $L\geq 1$ & \cite{alfarano2020weighted}\\
			\hline
			$O(n^{2\delta})$ & $L=1$ & \cite{chen2022convolutional}\\
			\hline
			$O((3n/2)^\delta)$ & $L=1$ & \cite{luo2023construction}\\
			\hline
			$O(n^\delta)$ & $L=1$ & This paper\\
			\hline
		\end{tabular}
		\caption{A comparison of the field size requirement for constructions of MDP convolutional codes with rate $R=k/n$ and degree $\delta$.}
		\label{tab:comp}
	\end{table}

	\subsection{Our results}
	In this paper we derive the first nontrivial {\emph{asymptotic}} lower bound on the field size of MDP convolutional codes for general parameters. Specifically, we show that a finite field of size \[\Omega_L(n^{L-1})\] is necessary for constructing MDP convolutional codes with rate $k/n$ and a maximum distance profile of length $L$. 
	As a direct consequence, this rules out the possibility of constructing convolutional codes with a maximum distance profile of length $L\geq 3$ over a finite field of size $O(n)$ for general code parameters.
	
	Additionally, we present an explicit construction of $(n,k)$ MDP convolutional codes over a finite field of size $O(n^{\delta})$ with degree $\delta=\min\{k,n-k\}$, improving on the known explicit algebraic constructions of convolutional codes with maximum profile length $L=1$. This construction is a refinement of the construction in \cite{chen2022convolutional} that relies on skew polynomials. More precisely, we show that by carefully choosing the evaluation points for skew polynomials, one can reduce the field size of the code construction in \cite{chen2022convolutional} from $O(n^{2\delta})$ to $O(n^{\delta})$.
	
	\section{The bound}
	
	In this section we show that a finite field $\ff_q$ of size $q=\Omega_L(n^{L-1})$ is necessary for any $(n,k,\delta)$ MDP convolutional codes, where $L=\lfloor\frac{\delta}{k}\rfloor+\lfloor\frac{\delta}{n-k}\rfloor$ is the maximum profile length. 
	
	As mentioned before, it is clear that the existence of $(n,k,\delta)$ MDP convolutional codes over $\ff_q$ demands $q$ growing at least linearly in $n$. The following proposition formalizes this simple observation.
	
	\begin{proposition}\label{prop:bound}
		Let $\cC$ be an $(n,k,\delta)$ MDP convolutional code over $\ff_q$. Then $q=\Omega(n)$.
	\end{proposition}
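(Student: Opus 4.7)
My plan is to reduce the statement to the classical lower bound on the field size of MDS block codes, using the zeroth column distance of an MDP code. The key observation is that by Definition~\ref{def:sMDS-MDP} the code $\cC$ satisfies $d_L^c(\cC) = (n-k)(L+1)+1$, and then by the ``moreover'' clause of Theorem~\ref{thm:sb-c} every earlier column distance also attains its Singleton-type bound; in particular $d_0^c(\cC) = n-k+1$.

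Next I would apply Theorem~\ref{thm:cd} with $j = 0$. Since the column-index constraint $t_{ks+1}\geq ns+1$ is vacuous in the range $s=1,\ldots,0$, the statement $d_0^c(\cC) = n-k+1$ is equivalent to every $k\times k$ minor of $G_0$ being nonzero. Hence the constant term $G_0\in\ff_q^{k\times n}$ of a minimal basic generator matrix of $\cC$ is itself a generator matrix of an $[n,k,n-k+1]$ MDS block code over $\ff_q$.

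From here the proposition follows by invoking the classical lower bound on MDS codes: whenever $2\leq k\leq n-2$, the existence of an $[n,k,n-k+1]$ MDS code over $\ff_q$ forces $q\geq n-1$, which gives $q = \Omega(n)$. The only thing I expect to require a small additional argument is the edge case $k\in\{1,n-1\}$, where the MDS condition on $G_0$ becomes trivial (repetition or single parity check) and does not in itself constrain $q$. For $k=1$ with $\delta\geq 1$ (so that $L\geq 1$), I would instead apply the MDP property of $G_1^c$: writing $G_0 = (a_1,\ldots,a_n)$ and $G_1 = (b_1,\ldots,b_n)$, the nonvanishing of the $2\times 2$ minors formed by columns $n+i,n+j$ with $i<j$ reads $a_ib_j-a_jb_i\neq 0$, forcing the ratios $b_i/a_i$ to be $n$ distinct elements of $\ff_q$ and hence $q\geq n$; the case $k=n-1$ reduces to this one via Theorem~\ref{thm:mdp-dual}. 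No step involves a genuine obstacle since the proposition merely formalizes the already-noted trivial bound; the only subtlety is making sure the small-$k$ edge cases are not overlooked.
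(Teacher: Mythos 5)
Your proof follows essentially the same route as the paper's: being MDP together with Theorem~\ref{thm:sb-c} gives $d_0^c(\cC)=n-k+1$, hence $G_0$ generates an $(n,k)$ MDS block code over $\ff_q$, and the classical MDS field-size bound yields $q=\Omega(n)$. The paper stops at ``$G_0$ is MDS, hence $q=\Omega(n)$,'' so your additional handling of the degenerate cases $k\in\{1,n-1\}$ (where the MDS condition alone does not constrain $q$, and for which you correctly fall back on the MDP property of $G_1^c$ when $\delta\geq 1$, plus duality) is a refinement of the same argument rather than a different approach — indeed it addresses a caveat that the paper's one-line conclusion glosses over.
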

	\begin{proof}
		Note that the maximum profile length of $\cC$ satisfies $L\geq 0$. Let $G(D)$ be a generator matrix of $\cC$. Since $\cC$ is MDP, then by Theorem~\ref{thm:sb-c} we have $d_0^c(\cC)=n-k+1$. Thus, any $k$ columns of the $k\times k$ truncated matrix $G_0^c=G_0$ have full rank. In other words, $G_0$ generates an $(n,k)$ MDS block code over $\ff_q$ and it follows that $q=\Omega(n)$.
	\end{proof}
	
	In the following we prove a new {asymptotic} lower bound on the field size by showing that the MDP property cannot be satisfied if the field size is not large enough. Our proof ideas are as follows. For an $(n,k,\delta)$ convolutional code to be MDP, by Theorem~\ref{thm:cd}, it has to satisfy that the full-size square submatrices of $G_L^c$ that are formed by at least $kj$ columns of the last $nj$ columns of $G_L^c$ for $j=1,\ldots,L$ are all of full rank. Inspired by the techniques in \cite{gopi2020maximally}, using the probabilistic method, we show that for constant $k$ and growing $n$, there exists a selection of $k(L+1)$ columns of $G_L^c$ that span a $(k(L+1)-1)$-dimensional subspace of $\ff_q^{k(L+1)}$ if the field size is not large enough. This leads to a lower bound for any MDP convolutional code with vanishing rate $R=k/n$. To extend this result to \emph{any} rate $R$, we note that the dual code of a convolutional code $\cC$ with rate approaching one that possesses a basic generator matrix has the same degree as the code $\cC$ and a vanishing rate. It follows that for MDP convolutional codes with rate approaching one, the field size should also be large enough. Moreover, given an MDP convolutional code of arbitrary rate, one may puncture the code at appropriate coordinates to obtain a new code with rate approaching one, thereby enabling the field size constraint to come into effect.  
	
	As the first step, let us establish a lower bound for MDP convolutional codes with vanishing rate.
	\begin{lemma}\label{le:vanishing-rate}
		Let $k$ be a constant independent of $n$.
		Let $G(D)$ be a $k\times n$ basic and minimal generator matrix of an $(n,k,\delta)$ MDP convolutional code over $\ff_q$. If the maximum profile length is a constant $L>1$, then the field size satisfies
		\begin{align*}
			q=\Omega_k(n^{k/\lceil k/L\rceil-1}).
		\end{align*} 
	\end{lemma}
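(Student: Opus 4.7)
The plan is to follow the probabilistic approach of Gopi et al.\ \cite{gopi2020maximally}: assume for contradiction that $q$ is below the claimed threshold, and then exhibit a valid selection of $k(L+1)$ columns of $G_L^c$ whose span is a proper subspace of $\ff_q^{k(L+1)}$. By Item~2 of Theorem~\ref{thm:cd}, this would violate the MDP property, giving the desired contradiction.

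Setting $r = \lceil k/L \rceil$, I would focus on the family of valid selections whose column-count profile is $(a_1,\ldots,a_{L+1}) = (r,r,\ldots,r,\, k(L+1) - rL)$. This profile is MDP-admissible since $a_1 + \cdots + a_s = sr \leq sk$ for every $s \leq L$ (using $r \leq k$), and it is well-defined in the regime where $n$ is large enough that $k(L+1) - rL \leq n$. For each selection, parametrized by subsets $S_1,\ldots,S_L \subset [n]$ of size $r$ and $S_{L+1} \subset [n]$ of size $k(L+1) - rL$, the corresponding $k(L+1) \times k(L+1)$ submatrix of $G_L^c$ is block-structured: its last $k$ rows (row block $L+1$) are zero in the first $rL$ columns and equal to $G_0[S_{L+1}]$ in the remaining columns. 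Iterated Laplace expansion, first along the last $k$ rows (which force the matching indices to lie in $S_{L+1}$), then along the preceding $k$-row blocks, exhibits the full determinant as a polynomial in the indices of $(S_1,\ldots,S_{L+1})$ whose degree in any single column index of $S_{L+1}$ is at most $r$, since each such index contributes one factor to each of at most $r$ distinct $k\times k$ sub-minors of some $G_i$ appearing in the expansion.

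Next, I would invoke a Schwartz--Zippel / pigeonhole counting argument. For the MDP property to hold, the determinantal polynomial must be nonvanishing on the entire parameter space of valid selections. A polynomial of degree $r$ in a single variable over $\ff_q$ has at most $r$ roots, so nonvanishing on $n$ distinct values of that variable forces $q \geq n - r + 1$. Composing the constraint across the $k$ independent directions supplied by the $L+1$ row blocks of $G_L^c$ (where each additional row block allows one more factor worth of degree-$r$ restrictions), and balancing the resulting exponents, yields the bound $q = \Omega_k(n^{k/r - 1}) = \Omega_k(n^{k/\lceil k/L\rceil - 1})$ as claimed.

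The main technical obstacle is the degree analysis of the iterated Laplace expansion: one must track precisely how the degree-$r$ bound on each individual index propagates through all $L+1$ row blocks so as to produce the sharp exponent $k/\lceil k/L\rceil - 1$ rather than a weaker constant. This requires careful bookkeeping of the combinatorial weights appearing in the expansion and crucially uses the MDS property of $G_0$ (already implied by the $j=0$ case of Theorem~\ref{thm:cd}) to guarantee that the relevant $k\times k$ minors of $G_0$ provide nonzero anchors at each level of the recursion. A secondary subtlety is to ensure that the dependent selection produced by the counting argument is itself MDP-valid; the choice of profile $(r,r,\ldots,r,k(L+1)-rL)$ is designed precisely so that all $S_j$ of the prescribed sizes automatically satisfy the column-index constraints of Item~2.
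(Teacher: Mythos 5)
The central step of your plan --- viewing each admissible full-size minor of $G_L^c$ as a low-degree polynomial in the column indices and then applying a Schwartz--Zippel/root-counting bound --- has no foundation in this setting. The lemma quantifies over \emph{arbitrary} $(n,k,\delta)$ MDP codes: the columns of $G_0,\ldots,G_L$ are arbitrary vectors over $\ff_q$, not evaluations of polynomials at field elements attached to the column labels, so ``degree at most $r$ in a single column index of $S_{L+1}$'' is not defined and there is no variable ranging over $\ff_q$ in which the determinant is a degree-$r$ polynomial. (Such a parametrization exists for specific constructions like Construction~\ref{con:1}, but not for the general codes the lemma must cover.) Even granting such a structure, the inference ``a degree-$r$ polynomial nonvanishing at $n$ distinct values forces $q\geq n-r+1$'' is not valid: nonvanishing at many points imposes nothing beyond the trivial requirement $q\geq n$; only forced vanishings could yield a constraint. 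So the counting step that is supposed to produce the exponent $k/\lceil k/L\rceil-1$ never gets off the ground, and the ``balancing of exponents across row blocks'' is asserted rather than derived.

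What the paper actually does, and what is missing from your proposal, is different in kind. For every admissible choice of sets $A_1,\ldots,A_L\subset\{1,\ldots,n\}$ with $|A_j|=k+k_j$ and $\sum_j k_j=k$ (columns taken only from the last $L$ block positions of $G_L^c$), column-reducing the resulting submatrix against the full-rank blocks $G_{0,B_j}$ produces $k\times k_j$ matrices $S_j$, and the MDP property forces
\begin{align*}
\bigoplus_{j=1}^{L}\Span_{\ff_q}S_j=\ff_q^{k}
\end{align*}
for \emph{every} such choice. The field-size bound then follows from a second-moment probabilistic argument: for a uniformly random hyperplane $\cT\subset\ff_q^{k}$, the expected number of choices of $A_j$ whose subspace $\cS_j=\Span_{\ff_q}S_j$ lies in $\cT$ is of order $n^{k-k_j}/q^{k_j}$, and the MDP property is invoked a second time to compute $\dim_{\ff_q}(\cS_j\cap\cS_j')$ for two overlapping choices of $A_j$, which controls the variance. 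If $q=o_k(n^{k/\lceil k/L\rceil-1})$, then with probability $1-o(1)$ there is a simultaneous choice of $A_1,\ldots,A_L$ with all $\cS_j\subset\cT$, contradicting the direct-sum condition above. Your proposal contains neither this direct-sum consequence of the MDP property nor any probabilistic or counting mechanism that could replace it, so the gap is not a matter of bookkeeping in a Laplace expansion but of the main idea of the proof.
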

	
	\begin{proof}
		Without loss of generality, we may assume $R\leq 1/2$, i.e., $2k\leq n$. It follows that $n\geq k+k/L$.
		Let $A_{1},A_{2},\ldots,A_{L}$ be subsets of $\{1,\ldots,n\}$ such that the following holds:
		\begin{enumerate}
			\item $\sum_{j=1}^{L}|A_j|=k(L+1)$;
			\item either $|A_j|=k+\lfloor \frac{k}{L} \rfloor$ or $|A_j|=k+\lceil \frac{k}{L} \rceil$ for $j=1,\ldots,L$.
		\end{enumerate}
		For simplicity, let us write $k_1=|A_1|-k,\ldots,k_L=|A_L|-k$.
		
		Let $P$ be the $k(L+1)\times k(L+1)$ matrix formed by columns of $G_L^c$ with indices in the set $\bigcup_{j=1}^{L}(A_{j}+jn)$ where $A_{j}+jn$ means the set obtained by adding every element of $A_{j}$ by the integer $jn$. 
		For clarity, let us write the matrix $P$ explicitly as
		\begin{align*}
			P=\begin{pmatrix}
				G_{1,A_{1}} & G_{2,A_{2}}   & \dots  & G_{L,A_L}   \\
				G_{0,A_{1}} & G_{1,A_{2}}   & \dots  & G_{L-1,A_L} \\
				& G_{0,A_{2}}   & \dots  & G_{L-2,A_L} \\
				&                 & \ddots & \vdots      \\
				&                 &        & G_{0,A_L}
			\end{pmatrix},
		\end{align*} where the $k\times |A_j|$ matrix $G_{i,A_j}$ is the submatrix of $G_{i}$ with column indices in $A_j$ for $i=0,\ldots,L$ and $j=1,\ldots,L$.
		Note that $P$ is formed by at least $kj$ columns of the last $nj$ columns of $G_L^c$ for $j=1,\ldots,L$. Therefore, by the MDP property of $G_L^c$, the matrix $P$ should have rank $(L+1)k$. 
		
		Let $B_j$ be a $k$-subset of $A_j$ for $j=1,\ldots,L$. By elementary column operations, we can transform $P$ to 
		\begin{align*}
			\left(
			\begin{array}{cc|cc|c|cc}
				S_{1} & G_{1,B_{1}} & S_{2}   & G_{2,B_{2}}   & \dots  & S_{L} & G_{L,B_L}   \\
				0           & G_{0,B_{1}} & 0             & G_{1,B_{2}}   & \dots  & 0           & G_{L-1,B_L} \\
				&             & 0             & G_{0,B_{2}}   & \dots  & 0           & G_{L-2,B_L} \\
				&             &               &                 & \ddots & \vdots      & \vdots      \\
				&             &               &                 &        & 0           & G_{0,B_L}
			\end{array}
			\right),
		\end{align*}
		where each column of the $k\times k_j$ matrix $S_{j}$ is a linear combination of the corresponding column in $G_{j,A_{j}\setminus B_{j}}$ and the columns of the $k\times k$ matrices $G_{1,B_{1}},\ldots,G_{j,B_{j}}$ for $j=1,\ldots,L$.
		Denote $S=\begin{pmatrix}
			S_1 & S_2 & \ldots & S_L
		\end{pmatrix}$.
		
		Since the code $\cC$ is MDP, by Theorem~\ref{thm:sb-c}, we have $d_j^c(\cC)=(n-k)(j+1)+1$ for $j=0,\ldots,L$. Thus, the matrix $G_0$ generates an $(n,k)$ MDS block code. It follows that $\rank G_{0,B_j}=k$ for $j=1,\ldots,L$. Moreover, since $\rank P=(L+1)k$, we have $\rank S=k$.
		
		Let $\cS_j= \Span_{\ff_q} S_j$. Then $\cS_j$ is an $k_j$-dimensional subspace of $\ff_q^k$ and we have
		$
		\bigoplus_{j=1}^L \cS_j = \ff_q^{k}.
		$ 
		Note that the subspace $\cS_j$ is determined by the choice of $A_1,\ldots, A_j$. {In particular, fixing $A_1,\ldots,A_j$, the subspace $\cS_j$ is independent of the choice of $B_1,\ldots,B_j$. We formalize this observation as the following proposition, whose proof can be found in Appendix~\ref{app:a}.
		\begin{proposition}\label{prop:s}
			Fix $A_1,\ldots,A_j$ and let $B_i,B_i'$ be $k$-subsets of $A_i$ for $i=1,\ldots,j$. Let $S_i$ (resp., $S_i'$) be obtained with respect to $B_1,\ldots,B_i$ (resp., $B_1',\ldots,B_i'$). Then we have $\Span_{\ff_q} S_i=\Span_{\ff_q} S_i'$ for $i=1,\ldots,j$. 
		\end{proposition}
		} 
		Below we show that if the field size $q$ is not large enough, there exists a choice of $A_1,\ldots, A_L$ such that the matrix $S$ is not of full rank. 
		Let {$x$ be a vector uniformly distributed over $\ff_q^{k}$}.
		and let $\cT$ be the orthogonal complement of $\Span_{\ff_q}\{x\}$. Note that $\cT$ is a random hyperplane in $\ff_q^{k}$ if $x\neq 0$. 
		
		Let $\cA_j$ be the set of $|A_j|$-subsets of $\{1,\ldots,n\}$ that contain $\{1,\ldots,2k_j\}$, namely, $\cA_j:=\{A\subset\{1,\ldots,n\}\mid A \supset \{1,\ldots,2k_j\}, |A|=|A_j|\}$.
		Let $X_j$ be the number of $A_j\in \cA_j$ such that $\cS_j$ is contained in $\cT$ given fixed sets $A_1,\ldots,A_{j-1}$. We would like to show that if $q$ is too small, then $X_j>0$ with high probability over the set $\cA_j$. Then by the union bound, we will be able to claim the event that $X_j>0$ for all $j$ simultaneously and $\cT$ is a hyperplane in $\ff_q^k$ occurs with high probability. 
		
		We will estimate $\Pr[X_j>0]$ by the second moment method. Since $\cS_j$ is an $k_j$-dimensional subspace of $\ff_q^{k}$, by linearity of expectation we have 
		\begin{align*}
			\mathbb{E}[X_j]&=\sum_{A_j\in \cA_j}\Pr[\cS_j \subset \cT]\\
			&=\binom{{n-2k_j}}{k-k_j}\frac{1}{q^{k_j}}.
		\end{align*}
		Let $A'_j\in\cA_j$ and let $\cS_j'$ be the subspace determined by $A_1,\ldots,A_{j-1}$ and $A'_j$. Then the second moment of $X_j$ is given by 
		\begin{align*}
			\mathbb{E}[X_j^2]&=\sum_{A_j,A'_j\in \cA_j}\Pr[\cS_j, \cS'_j \subset \cT]\\
			&=\sum_{i=2k_j}^{k+k_j}\sum_{\substack{A_j,A'_j\in \cA_j,\\|A_j\cap A'_j|=i}}\Pr[\cS_j, \cS'_j \subset \cT]\\
			&=\sum_{i=2k_j}^{k+k_j}\sum_{\substack{A_j,A'_j\in \cA_j,\\|A_j\cap A'_j|=i}}\frac{1}{q^{\dim_{\ff_q}(\cS_j+\cS'_j)}}.
		\end{align*}
		So it suffices to calculate $\dim_{\ff_q}(\cS_j+\cS'_j)$ in order to estimate $\mathbb{E}[X_j^2]$. 
		Note that 
		\begin{align*}
			\dim_{\ff_q}(\cS_j+\cS'_j)&= \dim_{\ff_q} \cS_j +\dim_{\ff_q}\cS'_j - \dim_{\ff_q}(\cS_j\cap\cS'_j)\\
			&= 2k_j - \dim_{\ff_q}(\cS_j\cap\cS'_j).
		\end{align*}
		Assume $|A_j\cap A'_j|=i\geq 2k_j$. 
		{
			We claim that the intersection of $\cS_j$ and $\cS_j'$ is trivial if $i\leq k$. Indeed, assuming $i \leq k$, by Proposition~\ref{prop:s} we may choose the $k$-subset $B_j$ (resp., $B_j'$) of $A_j$ (resp., $A_j'$) in such a way that it contains $A_j\cap A_j'$. Note that $\cS_j$ (resp., $\cS_j'$) can be viewed as generated by the columns of $G_L^c$ with indices in $B_1+n,\ldots,B_{j-1}+n(j-1)$ and $A_j+nj$ (resp., $A_j'+nj$). By the choice of $B_j,B_j'$, the columns of $G_L^c$ with indices in $(A_j\setminus B_j)+nj,(A_j'\setminus B_j')+nj$ are all distinct. 
			Since the subspaces $\cS_j,\cS_j'$ can be viewed as generated by $(j+1)k+2k_j-i\leq(j+1)k$ distinct columns of $G_j^c$, it follows that $\cS_j\cap \cS_j'=\{0\}$.}
			
		{Now consider $i > k$. In this case we cannot choose $B_j,B_j'$ as above. Instead, we take $B_j=B_j'\subset (A_j\cap A_j')$.} Let $V_j$ and $V'_j$ be the matrices formed by the columns of $G_L^c$ with indices in $\big(\bigcup_{{s}=1}^{j-1}(B_{{s}}+{s}n)\big)\cup(A_j+jn)$ and $\big(\bigcup_{{s}=1}^{j-1}(B_{{s}}+{s}n)\big)\cup(A'_j+jn)$, respectively.
		It follows that $|V_j\cup V_j'|=(j+1)k+2k_j-i \leq (j+1)k$. 
		Let $\cV_j=\Span_{\ff_q}V_j$ and $\cV'_j=\Span_{\ff_q}V'_j$. Then by the MDP property, we have
		\begin{align}
			\dim_{\ff_q}(\cV_j+\cV'_j)=|V_j\cup V'_j| = (j+1)k+2k_j-i.\label{eq:1}
		\end{align} At the same time, we have
		\begin{align}
			\dim_{\ff_q}(\cV_j+\cV'_j) &= \dim_{\ff_q} \cV_j +\dim_{\ff_q}\cV'_j - \dim_{\ff_q}(\cV_j\cap\cV'_j)\nonumber\\
			&= 2jk+2k_j - (\dim_{\ff_q}(\cS_j\cap\cS'_j)+jk).\label{eq:2}
		\end{align}
		It {follows} from \eqref{eq:1} and \eqref{eq:2} that 
		\begin{align*}
			\dim_{\ff_q}(\cS_j\cap\cS'_j) = {i-k}. 
		\end{align*} 
		Therefore, $\dim_{\ff_q}(\cS_j+\cS'_j)=2k_j -\max\{i-k,0\}$ and 
		\begin{align*}
			\mathbb{E}[X_j^2]
			&=\sum_{i=2k_j}^{k+k_j}\sum_{\substack{A_j,A'_j\in \cA_j,\\|A_j\cap A'_j|=i}}\frac{1}{q^{2k_j -\max\{i-k,0\}}}\\
			&=\sum_{i=2k_j}^{k+k_j}\!\!\!\binom{{n-2k_j}}{k-k_j}\binom{k-k_j}{i-2k_j}\binom{{n-k-k_j}}{k+k_j-i}\frac{1}{q^{2k_j -\max\{i-k,0\}}}.
		\end{align*} 
		By the second moment method, we have 
		\begin{align*}
			\frac{1}{\Pr[X_j>0]} &\leq \frac{\mathbb{E}[X_j^2]}{\mathbb{E}[X_j]^2}\\
			&{={\sum_{i=2k_j}^{k}  \binom{k-k_j}{i-2k_j}\binom{n-k-k_j}{k+k_j-i} }
				\frac{1}{\binom{n-2k_j}{k-k_j} }
				+
				\sum_{i=k+1}^{k+k_j}\binom{k-k_j}{i-2k_j}\binom{n-k-k_j}{k+k_j-i}\frac{q^{i-k}}{\binom{n-2k_j}{k-k_j}}}\\
			& \leq 1+\sum_{i=k+1}^{k+k_j}\binom{k-k_j}{i-2k_j}\binom{{n-k-k_j}}{k+k_j-i}\frac{q^{i-k}}{\binom{{n-2k_j}}{k-k_j}}\\
			& = 1+O_k(1)\sum_{i=k+1}^{k+k_j}\frac{q^{i-k}}{n^{i-2k_j}}.
		\end{align*}
		
		Toward a contradiction, suppose that $q=\Omega_k(n^{k/\lceil k/L\rceil-1})$ does not hold. Since $k_j\leq \lceil k/L\rceil$, we have $q=o_k(n^{k/k_j-1})$. It follows that $\Pr[X_j>0]=1-o(1)$. 
		{Note that Proposition~\ref{prop:bound} implies $\Pr[x=0]=\frac{1}{q^k}=o(1)$.} 
		By the union bound, the probability that $\cT$ is a hyperplane in $\ff_q^k$ and all the $X_j$'s are at least one simultaneously is 
		\begin{align*}
			\Pr[x\neq 0 \land (X_j>0, j=1,\ldots,L)]
			&\geq 1 - \Pr[x = 0] - \sum_{j=1}^{L}\Pr[X_j\leq 0]\\
			&=1 - o(1).
		\end{align*}
		{This implies that if $q=\Omega_k(n^{k/\lceil k/L\rceil-1})$ does not hold, then there exists a hyperplane $\cT\subsetneq \ff_q^{k}$ such that $\cT\supset \cS_j$ for all $j=1,\ldots,L$, which is a contradiction to the fact that $\bigoplus_{j=1}^L \cS_j = \ff_q^{k}$.} Hence, $q=\Omega_k(n^{k/\lceil k/L\rceil-1})$.
	\end{proof}
	
	\begin{remark}\label{re:l}
		Although Lemma~\ref{le:vanishing-rate} is stated with respect to the maximum profile length $L$, it is clear that the proof can be readily modified to show that if $\cC$ is {a convolutional code} over $\ff_q$ with the $l$th column distance attaining the bound \eqref{eq:sb-c} with equality, then $q=\Omega_k(n^{k/\lceil k/l\rceil-1})$.
	\end{remark}
	So far we have assumed $k$ is a constant and thus the rate $R=k/n$ is vanishing as $n$ grows. 
	{The duality result of MDP convolutional codes implies that a similar bound for the field size also holds for MDP convolutional codes with rate approaching one. This observation is formulated in the following lemma.
	\begin{lemma}\label{le:rate-one}
		Let $r$ be a constant independent of $n$ and let $k=n-r$. Let $G(D)$ be a $k\times n$ basic and minimal generator matrix of an $(n,k,\delta)$ MDP convolutional code $\cC$ over $\ff_q$. If the maximum profile length is a constant $L>1$, then the field size satisfies
		\begin{align*}
			q=\Omega_r(n^{r/\lceil r/L\rceil-1}).
		\end{align*}
	\end{lemma}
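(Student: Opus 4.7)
The plan is to reduce this lemma to Lemma~\ref{le:vanishing-rate} by a duality argument. The key observation is that the maximum profile length $L=\lfloor\delta/k\rfloor+\lfloor\delta/(n-k)\rfloor$ is symmetric in $k$ and $n-k$, so the dual of an $(n,k,\delta)$ MDP convolutional code with maximum profile length $L$ is itself an MDP convolutional code with the same degree and the same maximum profile length, but with dimension $n-k=r$ in place of $k$.

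First I would apply Theorem~\ref{thm:mdp-dual} to assert that $\cC^\perp$ is an $(n,r,\delta)$ MDP convolutional code over the same field $\ff_q$. Since every convolutional code admits a basic and minimal generator matrix, $\cC^\perp$ has one in particular. Because $r$ is a constant independent of $n$, the dual code has vanishing rate $r/n$ as $n$ grows, and in particular the requirement $2r\leq n$ used at the start of the proof of Lemma~\ref{le:vanishing-rate} is met for all sufficiently large $n$.

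Next I would apply Lemma~\ref{le:vanishing-rate} to $\cC^\perp$, with the role of the parameter $k$ in that lemma now played by $r$. Since $\cC^\perp$ has the same maximum profile length $L>1$ and is defined over the same field $\ff_q$, the lemma immediately yields
\[
q=\Omega_r\bigl(n^{r/\lceil r/L\rceil-1}\bigr),
\]
which is precisely the desired bound.

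I do not expect a genuine obstacle here: the argument is a clean invocation of duality followed by the vanishing-rate lemma. The only points worth double-checking are (i) that the maximum profile length is invariant under duality, which is immediate from the symmetry of the expression for $L$ together with the fact that $\cC^\perp$ has the same degree $\delta$ by Theorem~\ref{thm:mdp-dual}, and (ii) that $\cC^\perp$ admits a basic and minimal generator matrix suitable for feeding into Lemma~\ref{le:vanishing-rate}, which is a standard fact in the theory of convolutional codes.
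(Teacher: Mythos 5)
Your proposal is correct and matches the paper's own proof essentially verbatim: apply Theorem~\ref{thm:mdp-dual} to conclude that $\cC^\perp$ is an $(n,r,\delta)$ MDP convolutional code over $\ff_q$ with the same maximum profile length $L$, then invoke Lemma~\ref{le:vanishing-rate} with $r$ in place of $k$. The extra checks you flag (invariance of $L$ under duality, existence of a basic and minimal generator matrix for $\cC^\perp$) are sound and consistent with what the paper implicitly relies on.
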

	\begin{proof}
		By Theorem~\ref{thm:mdp-dual}, the dual code $\cC^{\perp}$ is an $(n,r,\delta)$ MDP convolutional code over $\ff_q$ that has the same maximum profile length $L$. Then it follows from Lemma~\ref{le:vanishing-rate} that $q=\Omega_r(n^{r/\lceil r/L\rceil-1})$.
	\end{proof}
	
	Given an MDP convolutional code of arbitrary rate, we may puncture the code or its dual code to appropriate coordinates and obtain a code with rate approaching one. This enables us to establish a bound for any rate.
	}
	\begin{theorem}\label{thm:main}
		Let $G(D)$ be a $k\times n$ basic and minimal generator matrix of an $(n,k,\delta)$ MDP convolutional code $\cC$ over $\ff_q$. If the maximum profile length is a constant $L\geq 1$, then 
		\begin{align*}
			q=\Omega_L(n^{L-1}).
		\end{align*}
	\end{theorem}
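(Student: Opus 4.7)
The plan is to puncture the code (or its dual, using Theorem~\ref{thm:mdp-dual}) so that the resulting code has parity dimension equal to the constant $L$ while its length remains $\Theta(n)$, and then invoke an extension of Lemma~\ref{le:rate-one} along the lines of Remark~\ref{re:l}. The case $L=1$ gives the trivial bound $q=\Omega(1)$, so assume $L\ge 2$.

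By Theorem~\ref{thm:mdp-dual} the dual $\cC^\perp$ is MDP with the same degree and maximum profile length, so replacing $\cC$ by $\cC^\perp$ if necessary I may assume $k\ge\lceil n/2\rceil$. When $n-k<L$, the parity dimension of $\cC$ is already a constant and Lemma~\ref{le:rate-one} applies directly, so I further assume $n-k\ge L$. Now puncture $\cC$ by removing any $n-k-L$ coordinates to obtain a code $\cC'$ of length $n'=k+L$. Every nonzero codeword of $\cC$ has weight at least $d_0^c(\cC)=n-k+1$ in its leading time slot, so removing at most $n-k$ coordinates cannot annihilate a nonzero codeword; hence $\dim\cC'=k$. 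Moreover, puncturing $s$ coordinates reduces each $d_j^c$ by at most $s(j+1)$, whence
\[
d_j^c(\cC')\ge d_j^c(\cC)-(n-k-L)(j+1)=L(j+1)+1,\qquad j=0,\ldots,L,
\]
and Theorem~\ref{thm:sb-c} forces equality. In particular, $\cC'$ has parity dimension $r'=L$ (a constant) and its $L$-th column distance attains the bound~\eqref{eq:sb-c}.

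The proof of Lemma~\ref{le:rate-one} proceeds by dualizing and invoking Lemma~\ref{le:vanishing-rate}; since the dualization uses only the equivalence between the primal and dual column-distance conditions established in Theorem~\ref{thm:cd}, and since that equivalence holds for any single index $l$, Lemma~\ref{le:rate-one} admits a natural analog (mirroring Remark~\ref{re:l}) applicable to any code whose $l$-th column distance attains~\eqref{eq:sb-c}. Applying this analog to $\cC'$ with $l=L$ yields
\[
q=\Omega_L\!\bigl((k+L)^{L/\lceil L/L\rceil-1}\bigr)=\Omega_L(k^{L-1}),
\]
and since $k\ge n/2$ the desired bound $q=\Omega_L(n^{L-1})$ follows.

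The main obstacle is to justify the extension of Lemma~\ref{le:rate-one} to codes whose $l$-th column distance—rather than the full MDP profile—attains~\eqref{eq:sb-c}. This step is conceptually routine: the argument is identical to that of Lemma~\ref{le:rate-one}, with Remark~\ref{re:l} invoked in place of Lemma~\ref{le:vanishing-rate} in the final step, the key point being that Theorem~\ref{thm:cd} establishes the primal/dual equivalence one index $l$ at a time.
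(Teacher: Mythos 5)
Your overall route is the same as the paper's: reduce to rate at least $1/2$ via Theorem~\ref{thm:mdp-dual}, puncture down to length $k+L$ so that the co-dimension becomes the constant $L$, show that the punctured code's $L$th column distance still attains \eqref{eq:sb-c}, and then invoke the dualized vanishing-rate bound (Lemma~\ref{le:rate-one} extended in the spirit of Remark~\ref{re:l}) to obtain $q=\Omega_L\bigl((k+L)^{L-1}\bigr)=\Omega_L(n^{L-1})$; your appeal to that extension is at the same level of informality as the paper's own citation of Lemma~\ref{le:rate-one} together with Remark~\ref{re:l}, so that step does not distinguish the two arguments. Where you genuinely differ is the justification of the puncturing step: the paper chooses the surviving coordinate set $J$ to contain a set $I$ with $\bar{G}_I$ nonsingular, uses Lemma~\ref{le:minimal-matrices} to keep the punctured generator matrix minimal (so $\cC_J$ is a $(k+L,k,\delta)$ code), and transfers the optimality of $d_L^c$ through the minor characterization of Theorem~\ref{thm:cd}, with an explicit caveat about basicness; you instead puncture arbitrary coordinates and argue intrinsically, via injectivity of the puncturing map (from $d_0^c(\cC)=n-k+1$), the estimate $d_j^c(\cC')\ge d_j^c(\cC)-s(j+1)$ with $s=n-k-L$, and Theorem~\ref{thm:sb-c} forcing equality. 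That argument is valid, somewhat cleaner, and avoids having to track minimality and the degree of the punctured code, which your downstream application never uses anyway.

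One branch, however, does not deliver the claimed exponent as written: in the sub-case $n-k<L$ (after reducing to $k\ge\lceil n/2\rceil$), applying Lemma~\ref{le:rate-one} with $r=n-k$ gives only $q=\Omega_r\bigl(n^{r/\lceil r/L\rceil-1}\bigr)=\Omega\bigl(n^{n-k-1}\bigr)$, and $n-k-1<L-1$ there, so this is strictly weaker than the stated bound; asserting that the case is handled ``directly'' by Lemma~\ref{le:rate-one} is therefore not justified. In fairness, the paper's proof silently requires $n-k\ge L$ as well (it takes $|J|=k+L\le n$) and never discusses this corner, so the gap is inherited rather than introduced by you, but it should be flagged rather than claimed as settled.
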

	
	\begin{proof}
		If $L=1$, then $q=\Omega_L(n^{L-1})$ holds trivially. Assume $L\geq 2$. Let us consider the case $k/n\geq 1/2$ first.
		
		By Lemma~\ref{le:minimal-matrices}, the matrix $\bar{G}$ of the highest order coefficients for $G(D)$ satisfies $\rank \bar{G} =k$. Let $I$ be a $k$-subset of $\{1,\ldots,n\}$ such that the square submatrix $\bar{G}_I$ has rank $k$.
		
		Let $J\subset \{1,\ldots,n\}$ be such that $J\supset I$ and let $G_J(D)$ be the $k\times |J|$ polynomial matrix obtained by puncturing $G(D)$ to the columns with indices in $J$. It follows that $\rank \bar{G}_J=k$. By Lemma~\ref{le:minimal-matrices}, $G_J(D)$ is minimal and has rank $k$. 
		Therefore, $G_J(D)$ generates a $(|J|,k,\delta)$ convolutional code $\cC_J$ over $\ff_q$. 
		{Note that $G_J(D)$ may not be basic in general, and therefore, $\cC_J$ may not have a polynomial parity-check matrix. However, the characterization of column distances in terms of generator matrices (Item~1 and 2 of Theorem~\ref{thm:cd}) still applies because the assumption of a basic generator matrix in Theorem~\ref{thm:cd} is only required to ensure the existence of a polynomial parity-check matrix so as to establish the characterization of column distances in terms of parity-check matrices (cf. \cite[Theorem~2.4]{gluesing2006strongly} and its proof \cite[Appendix~A]{gluesing2006strongly}).
		Therefore, by Item~2 of Theorem~\ref{thm:cd}}, the $L$th column distance of $\cC_J$ {attains} the bound \eqref{eq:sb-c} with equality {since the maximum profile length of $\cC$ is $L$}.
		{By assumption, $L$ is a constant independent of $n$, and thus we may take} $|J|=k+L$.
		Then it follows from {Lemma~\ref{le:rate-one}} and Remark~\ref{re:l} that the field size of {$\cC_J$} satisfies $q=\Omega_L((k+L)^{L-1})=\Omega_L(n^{L-1})$, which also holds for $\cC$. 
		
		Now consider the case $k/n<1/2$. It is clear that the dual code $\cC^\perp$ is an $(n,n-k,\delta)$ convolutional code of rate at least $1/2$, whose $L$th column distance attains the bound \eqref{eq:sb-c}. Therefore, we may apply the above puncturing technique for the case $k/n\geq 1/2$ to the code $\cC^\perp$. It follows that $q=\Omega_L((n-k+L)^{L-1})=\Omega_L(n^{L-1})$.
	\end{proof}
	
	\begin{remark}
		By Definition~\ref{def:sMDS-MDP}, 
		when $n-k \mid \delta$, Theorem~\ref{thm:main} also serves as a lower bound on the field size of strongly-MDS convolutional codes.
	\end{remark}
	
	\section{The improved construction}
	In this section we present an explicit construction of $(n,k,\delta)$ MDP convolutional codes with $L=1$ over a finite field of size $O(n^{\delta})$. More precisely, we show that for any rate $R=k/n\in(0,1)$ such that $R\neq 1/2$, there exists an explicit MDP convolutional code with rate $R$ and degree $\delta=\min\{k,n-k\}$ over a finite field of size $O(n^{\delta})$. The idea behind our construction is similar to \cite{chen2022convolutional} that is based on the theory of skew polynomials, whereas we select a different collection of evaluation points for the skew polynomials that arise in the construction. In contrast to conventional polynomials, the product of skew polynomials is not commutative, and this gives rise to distinctive properties of skew polynomials. To facilitate the discussion of our code construction, we gather some basic results of skew polynomials in the following subsection. We refer the readers to \cite{martinez2022codes} for a more detailed discussion of skew polynomials and their applications.
	
	\subsection{Skew polynomials and skew Vandermonde matrices}
	
	Let $\ff_{q^t}$ be a finite field of size $q^t$ where $q$ is a prime power. Let $\sigma\colon \ff_{q^t}\to\ff_{q^t}$ be the Frobenius automorphism. Namely, $\sigma(a)=a^q$ for any $a\in\ff_{q^t}$.
	
	\begin{definition}[Ring of skew polynomials]\label{def:skew-polynomial}
		Let $\ff_{q^t}[x;\sigma]$ be the ring of skew polynomials with indeterminate in $x$ and coefficients in $\ff_{q^t}$, where addition in $\ff_{q^t}[x;\sigma]$ is coefficient wise and multiplication in $\ff_{q^t}[x;\sigma]$ is distributive and satisfies that for any $a\in\ff_{q^t}$
		\begin{align*}
				xa=\sigma(a)x.
		\end{align*} 
	\end{definition}
	
	We note that skew polynomials can be defined more generally. For example, $\sigma$ can be taken to any field automorphism of $\ff_{q^t}$ in general. For the purpose of presenting our code construction, we will only consider the class of skew polynomials given in Definition~\ref{def:skew-polynomial}. 
	
	As the product of skew polynomials is not commutative, evaluation of skew polynomials over elements in a finite field should be done properly. 
	
	\begin{lemma}[Evaluation of skew polynomials]\label{le:eval}
		Let $f(x)=\sum_{i=0}f_ix^i\in\ff_{q^t}[x;\sigma]$. Then the evaluation of $f(x)$ at any $a\in\ff_{q^t}$ is given by
		\begin{align*}
				f(a) = \sum_{i}f_iN_i(a),
			\end{align*}
		where $N_i(a):=a^\frac{q^i-1}{q-1}$ for $i\in\mathbb{N}$.
	\end{lemma}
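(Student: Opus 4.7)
The plan is to use the standard ``remainder'' definition of evaluation for skew polynomials, namely that $f(a)$ is the unique element of $\ff_{q^t}$ such that $f(x)=g(x)(x-a)+f(a)$ for some $g(x)\in\ff_{q^t}[x;\sigma]$ under right division. Since both sides of the asserted formula are $\ff_{q^t}$-linear in $f$, it suffices to verify that the monomial $x^i$ evaluates to $N_i(a)$ for every $i\in\mathbb{N}$.

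For $i=0$, the claim reduces to $1=N_0(a)=a^0$, which is immediate. For $i\geq 1$, I would push the constant $a$ across occurrences of $x$ using the defining commutation rule $xa=\sigma(a)x=a^{q}x$. Iterating this rule gives $x^{i-1}a=\sigma^{i-1}(a)x^{i-1}=a^{q^{i-1}}x^{i-1}$, so
\begin{align*}
    x^i = x^{i-1}\bigl((x-a)+a\bigr) = x^{i-1}(x-a)+a^{q^{i-1}}x^{i-1}.
\end{align*}
This identifies $a^{q^{i-1}}x^{i-1}$ as the remainder of $x^i$ minus a multiple of $(x-a)$ on the right. An induction on $i$ then shows that the evaluation of $x^i$ equals $a^{q^{i-1}}$ times that of $x^{i-1}$.

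Unwinding the recursion yields
\begin{align*}
    x^i \;\longmapsto\; \prod_{j=0}^{i-1} a^{q^j} \;=\; a^{\,1+q+q^2+\cdots+q^{i-1}} \;=\; a^{\frac{q^i-1}{q-1}} \;=\; N_i(a),
\end{align*}
and summing over $i$ with coefficients $f_i$ gives the claimed formula $f(a)=\sum_i f_i N_i(a)$. The argument is essentially a bookkeeping exercise in the noncommutative ring $\ff_{q^t}[x;\sigma]$; the only step worth care is the inductive commutation step, which relies crucially on $\sigma$ being the $q$-power Frobenius so that $\sigma^j(a)=a^{q^j}$ and the exponents telescope into the geometric sum $(q^i-1)/(q-1)$.
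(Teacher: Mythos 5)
Your argument is correct: reducing by left-linearity of the evaluation map to the monomials $x^i$, writing $x^i=x^{i-1}(x-a)+\sigma^{i-1}(a)x^{i-1}$, and inducting on the remainder under right division by $(x-a)$ gives exactly the recursion $N_i(a)=\sigma(N_{i-1}(a))\,a=a^{q^{i-1}}N_{i-1}(a)$, whose telescoping yields $a^{(q^i-1)/(q-1)}$. The paper states this lemma without proof, as a standard fact from the theory of skew polynomials (the Lam--Leroy remainder definition of evaluation), and your derivation is precisely the standard one, so there is nothing to add.
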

	
	Different from conventional polynomials, a skew polynomial $f(x)\in\ff_{q^t}[x;\sigma]$ may have more than $\deg f$ roots in $\ff_{q^t}$. Moreover, the roots may belong to distinct \emph{conjugacy classes} in $\ff_{q^t}$ induced by the field automorphism $\sigma$.
	
	\begin{definition}[Conjugates]
		Let $a,b\in\ff_{q^t}$. Then $b$ is a conjugate of $a$ with respect to the field automorphism $\sigma$ if there exists $\beta\in\ff_{q^t}^*$ such that 
		\begin{align*}
				b=\sigma(\beta)a\beta^{-1}.
			\end{align*}
		We also call $b$ the $\beta$-conjugate of $a$ with respect to $\sigma$ and write $b={}^\beta a$ for brevity.
	\end{definition}
	
	The notion of conjugacy above defines an equivalence relation in $\ff_{q^t}$, and thus we can partition $\ff_{q^t}$ into distinct conjugacy classes. 
	For any $a\in\ff_{q^t}$, let us denote by $C_{a}^{\sigma}=\{ {}^\beta a \mid \beta\in\ff_{q^t}^*\}$ the conjugacy class with representative $a\in\ff_{q^t}$. The choice of $\sigma$ being the Frobenius automorphism induces the following conjugacy classes.
	
	\begin{proposition}[Conjugacy classes]\label{prop:conjugacy}
		Let $\gamma$ be a primitive element of $\ff_{q^t}$. Then $\{C_{0}^{\sigma},C_{\gamma^0}^{\sigma},C_{\gamma^1}^{\sigma},\ldots,C_{\gamma^{q-2}}^{\sigma}\}$ is a partition of $\ff_{q^t}$. Moreover, $|C_{0}^{\sigma}|=1$ and $|C_{\gamma^i}^{\sigma}|=\frac{q^t-1}{q-1}$ for $i=0,\ldots,q-2$.
	\end{proposition}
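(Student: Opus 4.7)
The plan is to use orbit-stabilizer together with elementary finite field arithmetic. First I would note that conjugacy with respect to $\sigma$ is an equivalence relation on $\ff_{q^t}$, so it suffices to identify distinct representatives and compute the size of each class.

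The trivial class is immediate: for $\beta\in\ff_{q^t}^*$ one has ${}^\beta 0 = \sigma(\beta)\cdot 0 \cdot \beta^{-1}=0$, so $C_0^\sigma=\{0\}$ and $|C_0^\sigma|=1$. For a nonzero representative $a\in\ff_{q^t}^*$, I would consider the surjection $\phi_a\colon \ff_{q^t}^*\to C_a^\sigma$ given by $\phi_a(\beta)=\sigma(\beta)a\beta^{-1}=\beta^q a\beta^{-1}$. The fiber over $a$ consists of all $\beta$ with $\beta^q a\beta^{-1}=a$, equivalently $\beta^{q-1}=1$, which is exactly $\ff_q^*$, a subgroup of order $q-1$. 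Since any two fibers of $\phi_a$ are cosets of this subgroup, they all have the same cardinality $q-1$, whence $|C_a^\sigma|=(q^t-1)/(q-1)$.

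Next I would show that the classes $C_{\gamma^0}^\sigma,\ldots,C_{\gamma^{q-2}}^\sigma$ are pairwise distinct. Suppose $0\le i<j\le q-2$ and $\gamma^i$ and $\gamma^j$ lie in the same class. Then $\gamma^j=\beta^q\gamma^i\beta^{-1}$ for some $\beta\in\ff_{q^t}^*$, which rearranges to $\gamma^{j-i}=\beta^{q-1}$. Since $\gamma$ is primitive in $\ff_{q^t}^*$, the image of the homomorphism $\beta\mapsto\beta^{q-1}$ on $\ff_{q^t}^*$ is the unique subgroup of index $q-1$, which equals $\{\gamma^{(q-1)\ell}:\ell\in\mathbb{Z}\}$. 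Hence $\gamma^{j-i}$ is a $(q-1)$-th power in $\ff_{q^t}^*$ if and only if $(q-1)\mid (j-i)$, which is impossible when $0<j-i<q-1$. This contradiction shows the classes are distinct.

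Finally, I would verify that these classes exhaust $\ff_{q^t}$ by counting: the total cardinality is
\begin{equation*}
|C_0^\sigma|+\sum_{i=0}^{q-2}|C_{\gamma^i}^\sigma|=1+(q-1)\cdot\frac{q^t-1}{q-1}=q^t,
\end{equation*}
which equals $|\ff_{q^t}|$. Combined with the fact that conjugacy is an equivalence relation, this forces the listed classes to partition $\ff_{q^t}$. No step is really an obstacle; the only subtle point is identifying the stabilizer of $a$ with $\ff_q^*$ and invoking the fact that $q-1\mid q^t-1$ so that $\ff_q^*$ is indeed a subgroup of $\ff_{q^t}^*$ of the claimed order.
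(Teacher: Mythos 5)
Your proof is correct. The paper itself does not prove Proposition~\ref{prop:conjugacy}; it is listed among the background facts on skew polynomials (with the reader referred to the cited literature), so there is no in-paper argument to compare against. Your argument is the standard one and is sound at every step: $C_0^\sigma=\{0\}$ is immediate; for $a\neq 0$ the fibers of $\beta\mapsto\beta^q a\beta^{-1}$ are cosets of $\ff_q^*$ (the solution set of $\beta^{q-1}=1$ in $\ff_{q^t}$, which has order $q-1$ because $\ff_q\subset\ff_{q^t}$), giving $|C_a^\sigma|=(q^t-1)/(q-1)$; distinctness of the classes of $\gamma^0,\ldots,\gamma^{q-2}$ follows since $\gamma^{j-i}$ is a $(q-1)$-th power iff $(q-1)\mid(j-i)$; and the counting argument, combined with conjugacy being an equivalence relation, yields the partition. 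One small simplification you could make: since $\ff_{q^t}$ is commutative, ${}^\beta a=\beta^{q-1}a$, so the nonzero classes are exactly the cosets of the index-$(q-1)$ subgroup of $(q-1)$-th powers in the cyclic group $\ff_{q^t}^*$, from which the class sizes, the distinct representatives $\gamma^0,\ldots,\gamma^{q-2}$, and the exhaustion of $\ff_{q^t}$ all follow at once without the separate counting step.
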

	
	The set of roots in $\ff_{q^t}$ of a skew polynomial $f(x)\in\ff_{q^t}[x;\sigma]$ has interesting structures closely related to the conjugacy classes of $\ff_{q^t}$.
	
	\begin{theorem}[Roots of skew polynomials]\label{thm:skew-dim}
		Let $f(x)\in\ff_{q^t}[x;\sigma]$ be a nonzero skew polynomial and $\Omega$ be the set of roots of $f(x)$ in $\ff_{q^t}$. Further, let $\bigcup_{i}\Omega_i=\Omega$ be the partition of $\Omega$ into conjugacy classes and let $\cS_{i}=\{\beta \mid {}^\beta a_i\in \Omega_i\}\cup\{0\}$ where $a_i$ is some fixed representative in $\Omega_i$. Then $\cS_{i}$ is a vector space over $F_i$ where $F_i=\ff_{q^t}$ if $\Omega_i=C_0^{\sigma}$ and otherwise $F_i=\ff_q$. Moreover, 
		\begin{align*}
				\sum_{i}\dim_{F_i} \cS_{i} \leq \deg f.
			\end{align*}
	\end{theorem}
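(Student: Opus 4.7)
The plan is to prove both the vector-space assertion and the dimension bound in tandem, via an inductive construction of minimal right-annihilators for each conjugacy class of roots. I would first dispatch the case $\Omega_i = C_0^{\sigma} = \{0\}$: since $\sigma(\beta)\cdot 0 \cdot \beta^{-1} = 0$ for every $\beta$, one has $\cS_i = \ff_{q^t}$, which is trivially an $\ff_{q^t}$-vector space of dimension one, consistent with the fact that $x$ right-divides $f$ whenever $f(0) = 0$. For a nonzero class $\Omega_i$ with chosen representative $a_i$, the direct calculation
\begin{align*}
{}^{c\beta}a_i = \sigma(c)\sigma(\beta)a_i\beta^{-1}c^{-1} = c\,\sigma(\beta)a_i\beta^{-1}c^{-1} = {}^{\beta}a_i, \quad c\in\ff_q^{*},
\end{align*}
uses $\sigma(c)=c^q=c$ and commutativity of $\ff_{q^t}$ to show that $\cS_i$ is closed under $\ff_q$-scaling, hence is at least a union of $\ff_q$-lines; closure under addition, however, is not visible from this formula alone.

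My main technical tools are the skew remainder theorem, $(x-\alpha)\mid_r g \iff g(\alpha)=0$, and the skew product rule $(gh)(a)=g({}^{h(a)}a)\,h(a)$ whenever $h(a)\neq 0$. Using these, I would build a monic right-divisor $m_i(x)\in\ff_{q^t}[x;\sigma]$ of $f$ of the form $m_i(x)=(x-\alpha_{d_i})\cdots(x-\alpha_1)$, adding one factor at a time to annihilate some root ${}^{\gamma_{r+1}}a_i\in\Omega_i$ not yet annihilated by the current product. The crux is to show, via iterated application of the product rule, that $m_i$ annihilates an element ${}^{\gamma}a_i$ precisely when $\gamma$ lies in the $F_i$-span of $\gamma_1,\ldots,\gamma_{d_i}$. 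This simultaneously delivers (i) $\cS_i$ equals this $F_i$-span and hence is an $F_i$-subspace, and (ii) $\deg m_i=\dim_{F_i}\cS_i$. I would then show that the ordered product $m_{i_1}\cdots m_{i_\ell}$ over all classes meeting $\Omega$ right-divides $f$: after peeling off $m_{i_1}$, the remaining roots from any other class $\Omega_j$ are transported to fresh elements under conjugation by the nonzero evaluations arising in the product rule, but since conjugation preserves conjugacy classes these fresh elements still lie in $\Omega_j$ and can be handled by subsequent $m_{i_s}$'s. Comparing degrees yields
\begin{align*}
\deg f \;\geq\; \sum_i \deg m_i \;=\; \sum_i \dim_{F_i}\cS_i,
\end{align*}
which is the desired inequality.

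The main obstacle will be the sharpness claim at the core of the inductive step, namely that $m_i$ annihilates \emph{only} those conjugates ${}^{\gamma}a_i$ with $\gamma$ in the $F_i$-span of the previously chosen parameters, rather than some strictly larger set. This is precisely the content of the Lam--Leroy theory of $P$-independence of conjugates for skew polynomials, and it pivots on the fact that the centralizer of any nonzero $a_i$ under the $\sigma$-twisted action is exactly $\ff_q$ (whereas for $a_i=0$ it is all of $\ff_{q^t}$, explaining the dichotomy in $F_i$). I would either quote this input directly from \cite{lam1985general,lam1988vandermonde} or reprove it through a careful bookkeeping of how the $h(a)$ factors accumulated in successive product-rule applications compose when $\sigma$ is $\ff_q$-linear.
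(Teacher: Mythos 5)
The paper itself gives no proof of this statement: Theorem~\ref{thm:skew-dim} is quoted as background from the Lam--Leroy theory of skew-polynomial evaluation (see \cite{lam1985general}, \cite{lam1988vandermonde}, \cite{martinez2022codes}), so there is no in-paper argument to compare yours against. Judged on its own, your sketch follows the standard literature route (class-by-class minimal right divisors built from the remainder theorem and the product rule, with the sharpness of the construction delegated to Lam--Leroy $P$-independence), and in outline it is sound; note, though, that the vector-space assertion for a nonzero class comes essentially for free from the linearization map of Lemma~\ref{le:skew-linearized}: for $a_i\neq 0$, $\cS_i$ is exactly the kernel of the $\ff_q$-linear map $\beta\mapsto f({}^{\beta}a_i)\beta$ (extended by $0\mapsto 0$), so closure under addition --- which you flag as invisible from the scaling computation --- needs no minimal-polynomial machinery, and the same map written out via $N_j({}^{\beta}a_i)\beta=\sigma^j(\beta)N_j(a_i)$ is a $q$-polynomial of $q$-degree at most $\deg f$, which already bounds the dimension within a single class.

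The one step of your plan that is wrong as stated is the final assembly: the ordered product $m_{i_1}\cdots m_{i_\ell}$ of the \emph{original} class-wise minimal polynomials need not right-divide $f$. After writing $f=f_1 m_{i_1}$ and applying the product rule to a root ${}^{\beta}a_j$ with $j\neq i_1$, the element you obtain is ${}^{\cD_{m_{i_1},a_j}(\beta)}a_j$; it lies in the conjugacy class of $a_j$, but it is a root of $f_1$, not (in general) of $f$, so it does not ``still lie in $\Omega_j$'' and cannot be handled by the untouched $m_{i_2},\ldots,m_{i_\ell}$. The correct continuation is either to replace each subsequent factor by the minimal polynomial of the \emph{transported} exponent set, or to work with the least left common multiple of the $m_i$'s; the degree count survives because the transport $\beta\mapsto m_{i_1}({}^{\beta}a_j)\beta$ is $\ff_q$-linear (Lemma~\ref{le:skew-linearized}) and injective on the relevant set, its kernel consisting of exponents of roots of $m_{i_1}$ in the class of $a_j$, of which there are none since all roots of $m_{i_1}$ stay in the class of $a_{i_1}$. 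With that repair (and with the sharpness claim either imported from \cite{lam1985general}, \cite{lam1988vandermonde} or reproved via the centralizer computation you indicate), your induction on the number of classes gives exactly the inequality $\sum_i\dim_{F_i}\cS_i\leq\deg f$.
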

	
	A skew Vandermonde matrix evaluates a skew polynomial $f(x)\in\ff_{q^t}[x;\sigma]$ at a set of points $\Omega\subset\ff_{q^t}$. The non-singularity of skew Vandermonde matrices is implied by Theorem~\ref{thm:skew-dim}.
	\begin{definition}[Skew Vandermonde matrices]\label{def:skew-vandermonde}
		Let $k$ be a positive integer and let $\Omega=\{a_1,\ldots,a_n\}\subset \ff_{q^t}$. The $k\times n$ skew Vandermonde matrix with respect to $\Omega$, denoted by $V_k(\Omega)$, is defined to be 
		\begin{align*}
				V_k(\Omega)=\begin{pmatrix}
						N_0(a_1) & N_0(a_2) & \cdots & N_0(a_n)\\
						N_1(a_1) & N_1(a_2) & \cdots & N_1(a_n)\\
						\vdots & \vdots & \ddots & \vdots\\
						N_{k-1}(a_1) & N_{k-1}(a_2) & \cdots & N_{k-1}(a_n)
					\end{pmatrix}.
			\end{align*} 
	\end{definition}
	
	The following corollary is a consequence of Theorem~\ref{thm:skew-dim}.
	
	\begin{corollary}[Rank of skew Vandermonde matrices]\label{coro:skew-vandermonde}
		Let $\Omega$ be an $n$-subset of $\ff_{q^t}$ and $\bigcup_{i}\Omega_i=\Omega$ be the partition of $\Omega$ into conjugacy classes. Let $l_i=|\Omega_i|$ and fix $a_i\in \Omega_i$. Suppose further that $\Omega_i=\{{}^{\beta_{ij}}a_i\mid j=1,\ldots,l_i\}$. Then $V_n(\Omega)$ has full rank if and only if 
		for each $i$ 
		the set $\{\beta_{ij}\mid j=1,\ldots,l_i\}$ is linearly independent over $F_i$ where $F_i=\ff_{q^t}$ if $\Omega_i=C_0^{\sigma}$ and otherwise $F_i=\ff_q$.
	\end{corollary}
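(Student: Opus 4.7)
My plan is to exploit the identification between the left nullspace of $V_n(\Omega)$ and skew polynomials of degree at most $n-1$ that vanish on $\Omega$. By Lemma~\ref{le:eval}, a row vector $(c_0,\ldots,c_{n-1})\in\ff_{q^t}^n$ annihilates $V_n(\Omega)$ from the left if and only if the skew polynomial $f(x)=\sum_{j=0}^{n-1} c_j x^j$ satisfies $f(a)=0$ for every $a\in\Omega$. Thus $V_n(\Omega)$ has full rank if and only if no nonzero $f\in\ff_{q^t}[x;\sigma]$ of degree at most $n-1$ vanishes on the whole of $\Omega$, reducing the corollary to a pure statement about the root structure of skew polynomials.

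For the forward direction I argue by contrapositive. If a nonzero $f$ of degree at most $n-1$ vanishes on $\Omega$, let $\tilde{\Omega}\supseteq\Omega$ be its full root set in $\ff_{q^t}$, partition it into conjugacy classes $\tilde{\Omega}_i\supseteq\Omega_i$, and let $\tilde{\cS}_i$ be the associated $F_i$-vector space furnished by Theorem~\ref{thm:skew-dim}. Each $\beta_{ij}$ belongs to $\tilde{\cS}_i$ by its very definition. Hence if the $\beta_{ij}$'s were $F_i$-linearly independent for every $i$, we would conclude $\dim_{F_i}\tilde{\cS}_i\geq l_i$ and thus $\deg f\geq\sum_i\dim_{F_i}\tilde{\cS}_i\geq\sum_i l_i=n$, contradicting $\deg f\leq n-1$.

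For the reverse direction, suppose the $\beta_{i_0 j}$'s are $F_{i_0}$-linearly dependent for some $i_0$, so that $d:=\sum_i\dim_{F_i}\Span_{F_i}\{\beta_{ij}\}<n$. I plan to construct a nonzero vanishing skew polynomial of degree at most $d$ as follows: for each conjugacy class separately, iteratively multiplying linear factors of the form $x-{}^{\beta}a_i$ with $\beta$ drawn from an $F_i$-basis of $\Span_{F_i}\{\beta_{ij}\}$ yields a skew polynomial $f_i$ of degree $\dim_{F_i}\Span_{F_i}\{\beta_{ij}\}$ that vanishes on all of $\Omega_i$; combining the $f_i$'s via the appropriate least common multiple then produces a skew polynomial $f$ of degree at most $d<n$ vanishing on $\Omega$, giving a nonzero element of the left nullspace of $V_n(\Omega)$.

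The main obstacle, concentrated in the reverse direction, is verifying that each $f_i$ vanishes on \emph{every} element of $\Omega_i$, not only on the basis elements used in its construction. This is the converse half of Theorem~\ref{thm:skew-dim}: the root set of a skew polynomial within a single conjugacy class corresponds to a full $F_i$-subspace of conjugation parameters, so vanishing on a spanning set extends to vanishing on the entire span. Making this step rigorous requires invoking the noncommutative product-evaluation rule for skew polynomials, which governs how multiplication interacts with evaluation at conjugates and ensures that the left least common multiple of polynomials supported on distinct conjugacy classes has degree equal to the sum of their degrees.
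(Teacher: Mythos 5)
Your proposal is correct, and the ``independence implies full rank'' half is exactly the argument the corollary is meant to encode: identify the left kernel of $V_n(\Omega)$ with skew polynomials of degree at most $n-1$ vanishing on $\Omega$ via Lemma~\ref{le:eval}, then apply the dimension bound of Theorem~\ref{thm:skew-dim} to the full root set (choosing $a_i$ itself as the representative of its conjugacy class of roots, which the theorem permits, so that each $\beta_{ij}$ really does lie in the corresponding space $\tilde{\cS}_i$). Where you take a genuinely heavier route is the converse. Your plan through minimal skew polynomials of the conjugate subspaces and a left lcm across conjugacy classes is sound, but it imports machinery the paper never develops: the product-evaluation rule, degree additivity of the lclm over distinct classes, and the fact that the successive linear factors are $x-{}^{\beta'}a_i$ with $\beta'$ the image of the basis vector under the linearized map of the partial product rather than the basis vector itself (you flag this, but it is the step that would need writing out). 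A two-line alternative already available in the paper's toolkit: a nontrivial dependence $\sum_j\lambda_j\beta_{i_0j}=0$ can only occur in a class with $a_{i_0}\neq 0$ (the class $C_0^\sigma$ contributes a single nonzero $\beta$), so $\lambda_j\in\ff_q$, and applying Lemma~\ref{le:skew-linearized} with $f=x^s$ for $s=0,\ldots,n-1$ gives $\sum_j\lambda_j N_s({}^{\beta_{i_0j}}a_{i_0})\beta_{i_0j}=0$ for every $s$; hence the columns of $V_n(\Omega)$ indexed by $\Omega_{i_0}$, each rescaled by the nonzero constant $\beta_{i_0j}$, are linearly dependent over $\ff_{q^t}$, and $V_n(\Omega)$ is singular. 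This is the same linearization mechanism the paper exploits for Theorem~\ref{thm:rank}, and it replaces your lclm construction entirely; with either converse, the statement and your overall argument stand.
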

	
	Note that evaluation of skew polynomials does not preserve linearity. Nevertheless, the following result provides a way to linearize the evaluation of skew polynomials on any conjugacy classes.
	
	\begin{lemma}[Linearization]\label{le:skew-linearized}
		Let $f(x)\in\ff_{q^t}[x;\sigma], a\in\ff_{q^t}$, and $\beta\in\ff_{q^t}^*$. Then
		$\cD_{f,a}(\beta):=f({}^\beta a)\beta$ is an $\ff_q$-linear map. In other words, for any $\lambda_1,\lambda_2\in\ff_q$ and $\beta_1,\beta_2\in\ff_{q^t}$, we have $\cD_{f,a}(\lambda_1\beta_1+\lambda_2\beta_2)=\lambda_1\cD_{f,a}(\beta_1)+\lambda_2\cD_{f,a}(\beta_2)$.
	\end{lemma}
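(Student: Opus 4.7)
The plan is to obtain an explicit closed form for $\cD_{f,a}(\beta)$ in terms of iterated Frobenius powers of $\beta$, from which $\ff_q$-linearity is immediate. First, writing $f(x)=\sum_{i}f_ix^i$ and applying Lemma~\ref{le:eval} to the conjugate ${}^\beta a = \sigma(\beta)a\beta^{-1}$, I would reduce the problem to computing $N_i({}^\beta a)$ for each $i$, since
\[
\cD_{f,a}(\beta)=f({}^\beta a)\beta=\sum_{i}f_i\,N_i({}^\beta a)\,\beta.
\]

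The key computation is a telescoping identity
\[
N_i({}^\beta a)=\sigma^i(\beta)\,\beta^{-1}\,N_i(a).
\]
To prove this, I would use the product form $N_i(a)=\prod_{j=0}^{i-1}\sigma^j(a)$ together with $\sigma^j({}^\beta a)=\sigma^{j+1}(\beta)\,\sigma^j(a)\,\sigma^j(\beta)^{-1}$. Since $\ff_{q^t}$ is commutative, the factors can be regrouped as
\[
N_i({}^\beta a)=\Bigl(\prod_{j=1}^{i}\sigma^j(\beta)\Bigr)\,N_i(a)\,\Bigl(\prod_{j=0}^{i-1}\sigma^j(\beta)\Bigr)^{-1},
\]
and the two products differ only in their endpoint terms $\sigma^i(\beta)$ and $\sigma^0(\beta)=\beta$, yielding the claimed identity.

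Multiplying by $\beta$ on the right gives $N_i({}^\beta a)\,\beta=\sigma^i(\beta)\,N_i(a)=\beta^{q^i}\,N_i(a)$, so
\[
\cD_{f,a}(\beta)=\sum_{i}f_i\,N_i(a)\,\beta^{q^i}.
\]
Now for any $\lambda_1,\lambda_2\in\ff_q$ and $\beta_1,\beta_2\in\ff_{q^t}$, each Frobenius power $\beta\mapsto \beta^{q^i}$ satisfies $(\lambda_1\beta_1+\lambda_2\beta_2)^{q^i}=\lambda_1^{q^i}\beta_1^{q^i}+\lambda_2^{q^i}\beta_2^{q^i}=\lambda_1\beta_1^{q^i}+\lambda_2\beta_2^{q^i}$, because $\ff_q$ is the fixed field of $\sigma$. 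Substituting this into the displayed expression for $\cD_{f,a}$ and rearranging gives $\cD_{f,a}(\lambda_1\beta_1+\lambda_2\beta_2)=\lambda_1\cD_{f,a}(\beta_1)+\lambda_2\cD_{f,a}(\beta_2)$.

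I do not anticipate any real obstacle; the only nontrivial step is verifying the telescoping identity for $N_i({}^\beta a)$, and even this is routine once one uses commutativity of $\ff_{q^t}$ to collect the $\sigma^j(\beta)$ factors separately from the $\sigma^j(a)$ factors.
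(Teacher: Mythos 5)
Your proof is correct. The paper itself does not prove this lemma --- it is quoted as background from the skew-polynomial literature (cf.\ \cite{martinez2022codes}, \cite{lam1985general}, \cite{lam1988vandermonde}) --- so there is no in-paper argument to compare against; your derivation is the standard self-contained one for the Frobenius case. The key identity $N_i({}^\beta a)\beta=N_i(a)\beta^{q^i}$ is valid, and follows either by your telescoping of the product form $N_i(a)=\prod_{j=0}^{i-1}\sigma^j(a)$ or even more directly from the paper's definition $N_i(a)=a^{(q^i-1)/(q-1)}$, since $({}^\beta a)^{(q^i-1)/(q-1)}=\beta^{q^i-1}a^{(q^i-1)/(q-1)}$ in a commutative field; either way $\cD_{f,a}(\beta)=\sum_i f_iN_i(a)\beta^{q^i}$ is a linearized ($q$-)polynomial in $\beta$, and $\ff_q$-linearity is immediate because elements of $\ff_q$ are fixed by $\sigma$. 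One cosmetic remark: ${}^\beta a$ is only defined for $\beta\in\ff_{q^t}^*$, so to speak of linearity on all of $\ff_{q^t}$ one sets $\cD_{f,a}(0):=0$, which is exactly the value your closed form assigns, so no actual gap arises.
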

	
	Proposition~\ref{prop:conjugacy}, Corollary~\ref{coro:skew-vandermonde}, and Lemma~\ref{le:skew-linearized} together imply the following.
	
	\begin{theorem}\label{thm:rank}
			Let $\Omega$ be an $n$-subset of $\ff_{q^t}^*$ and $\bigcup_{i}^{m}\Omega_i=\Omega$ be the partition of $\Omega$ into conjugacy classes where $m\leq q-2$. Let $l_i=|\Omega_i|$ and let $\gamma$ be a primitive element of $\ff_{q^t}$. Suppose $a_i=\gamma^{t_i}\in \Omega_i$ and that $\Omega_i=\{{}^{\beta_{i,j}}a_i\mid j=1,\ldots,l_i\}$. Then the matrix
			\begin{align*}
				\tilde{V}_n(\Omega):=\left(\begin{array}{ccc|c|ccc}
					N_0({}^{\beta_{1,1}}a_1)\beta_{1,1}     & \cdots & N_0({}^{\beta_{1,l_1}}a_1)\beta_{1,l_1}     & \cdots & N_0({}^{\beta_{m,1}}a_m)\beta_{m,1}     & \cdots & N_0({}^{\beta_{m,l_m}}a_m)\beta_{m,l_m}     \\
					N_1({}^{\beta_{1,1}}a_1)\beta_{1,1}     & \cdots & N_1({}^{\beta_{1,l_1}}a_1)\beta_{1,l_1}     & \cdots & N_1({}^{\beta_{m,1}}a_m)\beta_{m,1}     & \cdots & N_1({}^{\beta_{m,l_m}}a_m)\beta_{m,l_m}     \\
					\vdots                                  & \ddots & \vdots                                      & \cdots & \vdots                                  & \ddots & \vdots                                      \\
					N_{n-1}({}^{\beta_{1,1}}a_1)\beta_{1,1} & \cdots & N_{n-1}({}^{\beta_{1,l_1}}a_1)\beta_{1,l_1} & \cdots & N_{n-1}({}^{\beta_{m,1}}a_m)\beta_{m,1} & \cdots & N_{n-1}({}^{\beta_{m,l_m}}a_m)\beta_{m,l_m}
				\end{array}\right).
			\end{align*} 
			has full rank if and only if for each $i=1,\ldots,m$
			the set $\{\beta_{i,j}\mid j=1,\ldots,l_i\}$ is linearly independent over $\ff_q$.
	\end{theorem}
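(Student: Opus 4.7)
The plan is to deduce Theorem~\ref{thm:rank} from Corollary~\ref{coro:skew-vandermonde} by a simple column-scaling argument, relying crucially on the assumption $\Omega\subset\ff_{q^t}^*$.

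First, I would observe that the matrix $\tilde{V}_n(\Omega)$ is obtained from the ordinary skew Vandermonde matrix $V_n(\Omega)$ of Definition~\ref{def:skew-vandermonde} by rescaling columns: the column of $V_n(\Omega)$ indexed by ${}^{\beta_{i,j}}a_i$ is $(N_s({}^{\beta_{i,j}}a_i))_{s=0}^{n-1}$, and the corresponding column of $\tilde{V}_n(\Omega)$ is that same vector multiplied by the scalar $\beta_{i,j}\in\ff_{q^t}^*$. Since each $\beta_{i,j}$ is nonzero by definition of the conjugate ${}^{\beta_{i,j}}a_i=\sigma(\beta_{i,j})a_i\beta_{i,j}^{-1}$, the column scaling is by nonzero elements and hence preserves rank: $\rank\tilde{V}_n(\Omega)=\rank V_n(\Omega)$.

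Next, I would apply Corollary~\ref{coro:skew-vandermonde} to $V_n(\Omega)$. Because $\Omega\subset\ff_{q^t}^*$, no conjugacy class in the partition $\bigcup_{i=1}^m\Omega_i=\Omega$ equals $C_0^\sigma$, so in the notation of the corollary $F_i=\ff_q$ for every $i=1,\ldots,m$. The corollary then states that $V_n(\Omega)$ has full rank if and only if, for each $i$, the set $\{\beta_{i,j}\mid j=1,\ldots,l_i\}$ is linearly independent over $\ff_q$. Combined with the rank-preservation observation from the previous step, this yields exactly the claimed equivalence.

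The hard part is essentially bookkeeping: one must verify that the conjugacy-class partition used inside Corollary~\ref{coro:skew-vandermonde} matches the one parametrizing the columns of $\tilde{V}_n(\Omega)$, and that the hypothesis $m\leq q-2$, which upper-bounds the number of nontrivial conjugacy classes available inside $\ff_{q^t}^*$ via Proposition~\ref{prop:conjugacy}, is consistent with the setup. There is no genuine analytical obstacle. An alternative route would bypass $V_n(\Omega)$ and work directly with the $\ff_q$-linear maps $\cD_{x^s,a_i}$ of Lemma~\ref{le:skew-linearized}: each row of $\tilde{V}_n(\Omega)$ restricted to the block of columns indexed by $\Omega_i$ is the image of $\{\beta_{i,j}\}$ under $\cD_{x^s,a_i}$, which makes the ``only if'' direction immediate but still requires invoking Theorem~\ref{thm:skew-dim} to handle the ``if'' direction across blocks. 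I would favour the column-scaling reduction since it is strictly cleaner.
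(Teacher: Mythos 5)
Your proposal is correct and matches the paper's (unwritten) argument: the paper simply asserts that Theorem~\ref{thm:rank} follows from Proposition~\ref{prop:conjugacy}, Corollary~\ref{coro:skew-vandermonde}, and Lemma~\ref{le:skew-linearized}, and your column-scaling reduction—each column of $\tilde{V}_n(\Omega)$ is the corresponding column of $V_n(\Omega)$ scaled by the nonzero $\beta_{i,j}$, and $\Omega\subset\ff_{q^t}^*$ forces $F_i=\ff_q$ in the corollary—is exactly the intended deduction made precise. No gap; your observation that the scaling step lets you avoid explicitly invoking Lemma~\ref{le:skew-linearized} is a minor, harmless simplification.
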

	
	\subsection{Construction of MDP convolutional codes with $L=1$}
In this subsection we present a family of $(n,k,\delta)$ MDP convolutional codes with the maximum profile length $L=1$ over a finite field of size $O(n^{\delta})$ where $\delta=\min\{k,n-k\}$. Our construction can viewed as an improved version of code construction in \cite{chen2022convolutional}. Similar to \cite{chen2022convolutional}, the generator matrix of our construction is based on skew Vandermonde matrices. The improvement of the field size comes from a clever trick in choosing the defining elements of the underlying skew Vandermonde matrices.

\begin{construction}\label{con:1}
Let $q\geq\max\{3,n\}$ be a prime power and let 
$\gamma$ be a primitive element of the finite field $\ff_{q^k}$.
Further, let $\{\lambda_1,\ldots,\lambda_n\}\subset\ff_q$ be $n$ distinct elements.
For $j=0,1$ denote $\gamma_j=\gamma^j$ and 
define $\alpha_{j,i}\in\ff_{q^k}$ using a fixed basis for $\ff_{q^k}$ over $\ff_q$ as 
\begin{align}
	\alpha_{j,i}&=
	(
	\lambda_i^{(1-j)k},
	\lambda_i^{(1-j)k+1},
	\ldots,
	\lambda_i^{(2-j)k-1}
	)
	,\quad i=1,\ldots,n.\label{eq:alpha}
\end{align}
Let $\cC$ be an $(n,k)$ convolutional code over $\ff_{q^k}$ with generator matrix $G(D)=G_0+G_1D$ where $G_j,j=0,1$ are given by
\begin{align}
	G_j=\begin{pmatrix}
			N_0({}^{\alpha_{j,1}} \gamma_j)\alpha_{j,1} & N_0({}^{\alpha_{j,2}} \gamma_j)\alpha_{j,2} & \cdots & N_0({}^{\alpha_{j,n}} \gamma_j)\alpha_{j,n}\\
			N_1({}^{\alpha_{j,1}} \gamma_j)\alpha_{j,1} & N_1({}^{\alpha_{j,2}} \gamma_j)\alpha_{j,2} & \cdots & N_1({}^{\alpha_{j,n}} \gamma_j)\alpha_{j,n}\\
			\vdots & \vdots & \ddots & \vdots\\
			N_{k-1}({}^{\alpha_{j,1}} \gamma_j)\alpha_{j,1} & N_{k-1}({}^{\alpha_{j,2}} \gamma_j)\alpha_{j,2} & \cdots & N_{k-1}({}^{\alpha_{j,n}} \gamma_j)\alpha_{j,n}
		\end{pmatrix}.\nonumber
\end{align}
\end{construction}

Note that the elements $\alpha_{j,i}$ are defined in such a way \eqref{eq:alpha} that one may regard $\beta_i:=(\alpha_{1,i},\alpha_{0,i})$ as a length-$(2k)$ vector $(1,\lambda_i,\lambda_i^2,\ldots,\lambda_i^{2k-1})$ over $\ff_q$. From this perspective, since $\lambda_1,\ldots,\lambda_n$ are $n$ distinct elements in $\ff_q$, any $2k$ vectors of the set $\{\beta_1,\ldots,\beta_n\}$ are linearly independent over $\ff_q$. This ``concatenated independence'' property of the elements $\beta_i$ will be instrumental in establishing the MDP property of Construction~\ref{con:1}. In fact, this trick of choosing the field elements to satisfy the ``concatenated independence'' property is used in {\cite{cai2021construction,gopi2022improved}}, where the task is to show certain block matrices are of full rank, and also in the recent construction of MDP convolutional codes \cite{luo2023construction}.

Before proving the MDP property of Construction~\ref{con:1}, let us first observe a few simple properties for the generator matrix $G(D)$ that will be useful later. 

\begin{proposition}\label{prop:skew-vandermonde-full-rank}
	Any $k\times k$ submatrix of $G_j$ defined in Construction~\ref{con:1} has full rank.
\end{proposition}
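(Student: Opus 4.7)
The plan is to recognize each $k\times k$ submatrix of $G_j$ as a special case of the skew Vandermonde-type matrix $\tilde V_k(\Omega)$ from Theorem~\ref{thm:rank}, and then reduce the claim to an $\ff_q$-linear independence statement about the $\alpha_{j,i}$'s that is essentially a Vandermonde computation.

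Concretely, fix $j\in\{0,1\}$ and any index set $I=\{i_1,\ldots,i_k\}\subset\{1,\ldots,n\}$. Every column of $G_j$ arising in this submatrix is obtained by evaluating at a point of the form ${}^{\alpha_{j,i}}\gamma_j$, so the resulting evaluation set $\Omega=\{{}^{\alpha_{j,i_s}}\gamma_j : s=1,\ldots,k\}$ lies entirely in the single conjugacy class $C_{\gamma_j}^{\sigma}$ (using the partition from Proposition~\ref{prop:conjugacy}). Thus the submatrix is exactly $\tilde V_k(\Omega)$ with a single conjugacy class ($m=1$). Applying Theorem~\ref{thm:rank} in this setting, full rank of the submatrix is equivalent to the collection $\{\alpha_{j,i_1},\ldots,\alpha_{j,i_k}\}$ being linearly independent over $\ff_q$.

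To verify this independence, I would unwind the definition \eqref{eq:alpha} in the fixed $\ff_q$-basis of $\ff_{q^k}$: each $\alpha_{j,i}$ corresponds to the row vector $(\lambda_i^{(1-j)k},\lambda_i^{(1-j)k+1},\ldots,\lambda_i^{(2-j)k-1})\in\ff_q^k$, which factors as $\lambda_i^{(1-j)k}\,(1,\lambda_i,\ldots,\lambda_i^{k-1})$. Stacking the $k$ rows indexed by $i_1,\ldots,i_k$ produces the matrix $\mathrm{Diag}(\lambda_{i_1}^{(1-j)k},\ldots,\lambda_{i_k}^{(1-j)k})\cdot V$, where $V$ is the standard $k\times k$ Vandermonde matrix in $\lambda_{i_1},\ldots,\lambda_{i_k}$. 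Since the $\lambda_i$'s are distinct (and nonzero, as may be assumed in the construction), both factors have nonzero determinant, so the rows are $\ff_q$-linearly independent.

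There is no real obstacle here beyond being careful about the bookkeeping: the argument is a clean application of Theorem~\ref{thm:rank} once one observes (i) all $k$ evaluation points belong to the same conjugacy class, which removes any interaction between distinct classes, and (ii) the coefficients $\alpha_{j,i}$ are engineered so that their $\ff_q$-basis expansions are precisely Vandermonde rows (up to a common scalar in each row). This Vandermonde structure is exactly the ``concatenated independence'' trick highlighted just before the proposition, specialized to a single block of $k$ consecutive powers of $\lambda_i$.
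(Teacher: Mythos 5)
Your proof is correct and follows essentially the same route as the paper's: view each $k\times k$ submatrix as $\tilde{V}_k(\Omega)$ with all evaluation points lying in the single conjugacy class $C_{\gamma_j}^{\sigma}$, then invoke Theorem~\ref{thm:rank} together with the $\ff_q$-linear independence of any $k$ of the $\alpha_{j,i}$'s, which you verify explicitly via the diagonal-times-Vandermonde factorization (the paper simply asserts this independence from \eqref{eq:alpha}). The only caveat, which you already flag, is that for $j=0$ your factorization needs $\lambda_i\neq 0$; this is anyway implicit in Construction~\ref{con:1}, since the conjugate ${}^{\alpha_{0,i}}\gamma_0$ is only defined for $\alpha_{0,i}\in\ff_{q^k}^*$.
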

\begin{proof}
	By \eqref{eq:alpha}, any $k$ elements of $\{\alpha_{j,i}\mid i=1,\ldots,n\}$ are linearly independent over $\ff_q$. 
	Since ${}^{\alpha_{j,1}}\gamma_j,\ldots,{}^{\alpha_{j,n}}\gamma_j$ are in the same conjugacy class $C_{\gamma_j}^{\sigma}$, it follows from Theorem~\ref{thm:rank} that any $k\times k$ submatrix of $G_j$ has full rank.
\end{proof}

\begin{proposition}\label{prop:minimal}
	The $k\times n$ generator matrix $G(D)$ defined in Construction~\ref{con:1} is minimal.
\end{proposition}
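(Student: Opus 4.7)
The plan is to invoke Lemma~\ref{le:minimal-matrices} and reduce the claim to showing that the matrix of highest-order coefficients $\bar{G}$ has rank $k$. Since $G(D)=G_0+G_1D$, the $i$th row degree satisfies $\nu_i\in\{0,1\}$, with $\nu_i=1$ precisely when the $i$th row of $G_1$ is nonzero, and $\nu_i=0$ otherwise.

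The key observation is that Proposition~\ref{prop:skew-vandermonde-full-rank} already implies that every $k\times k$ submatrix of $G_1$ is nonsingular. Consequently, $G_1$ has full row rank $k$, and in particular no row of $G_1$ is the zero vector. This gives $\nu_i=1$ for every $i=1,\ldots,k$, so the matrix of highest-order coefficients is exactly $\bar{G}=G_1$.

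Since $\bar{G}=G_1$ has rank $k$ (again by Proposition~\ref{prop:skew-vandermonde-full-rank}), Lemma~\ref{le:minimal-matrices} yields that $G(D)$ is minimal, completing the proof. There is no real obstacle here: the whole argument is a direct bookkeeping consequence of the already-established full-rank property of $G_1$, which in turn rests on Theorem~\ref{thm:rank} and the linear independence over $\ff_q$ of any $k$ of the vectors $\alpha_{1,i}$.
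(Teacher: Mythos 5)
Your proof is correct and follows essentially the same route as the paper: identify $\bar{G}=G_1$ and apply Lemma~\ref{le:minimal-matrices} together with Proposition~\ref{prop:skew-vandermonde-full-rank}, which gives $\rank G_1=k$. The only difference is that you explicitly justify $\nu_i=1$ for every row (a step the paper treats as obvious), which is a harmless extra bit of care.
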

\begin{proof}
	It is clear that the matrix of the highest order coefficients for $G(D)$ is $\bar{G}=G_1$. Moreover, by Proposition~\ref{prop:skew-vandermonde-full-rank}, $\bar{G}=G_1$ has rank $k$. Therefore, by Lemma~\ref{le:minimal-matrices}, $G(D)$ is minimal.
\end{proof}

In the following we first prove that $\cC$ is an MDP convolutional code whenever the rate $k/n< 1/2$. Once this is done, the result is extended for construction of MDP convolutional codes with rate $k/n>1/2$ using Theorem~\ref{thm:mdp-dual}. {We note that the starting point of our proof here is similar to the proof of Lemma~\ref{le:vanishing-rate}. Namely, we start with an arbitrary full-size square submatrix of $G_L^c$ that has nontrivial determinant and reduce the submatrix to a proper upper triangular form with square blocks on its diagonal, facilitating the proof of the MDP property.} 

\begin{theorem}\label{thm:m1}
	Assume $n>2k$. The $(n,k)$ code $\cC$ defined in Construction~\ref{con:1} is an $(n,k,\delta=k)$ MDP convolutional code over $\ff_{q^k}$ with $L=1$. 
\end{theorem}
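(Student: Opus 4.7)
The plan is to verify the MDP property of $G_1^c$. Propositions~\ref{prop:skew-vandermonde-full-rank} and \ref{prop:minimal} show that $G(D)$ is minimal with each row degree equal to $1$, giving $\delta=k$; combined with $n>2k$ this yields $L=\lfloor k/k\rfloor+\lfloor k/(n-k)\rfloor=1$. As $k\mid\delta$, Lemma~\ref{le:mdp-generic} reduces the claim to showing that every $2k\times 2k$ submatrix $P$ of $G_1^c$ with column indices $1\leq t_1<\cdots<t_{2k}\leq 2n$ satisfying $t_{k+1}\geq n+1$ is nonsingular. Parameterize such a $P$ by $A_1,A_2\subset\{1,\ldots,n\}$ with $|A_1|=k-s$ and $|A_2|=k+s$ for some $s\in\{0,\ldots,k\}$, so that
\[
P=\begin{pmatrix} G_{0,A_1} & G_{1,A_2} \\ 0 & G_{0,A_2} \end{pmatrix}.
\]

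Pick any $k$-subset $B_2\subset A_2$; by Proposition~\ref{prop:skew-vandermonde-full-rank}, $G_{0,B_2}$ is invertible. The $k$ vectors $\{\alpha_{0,j}:j\in B_2\}$ are $\ff_q$-linearly independent by the choice in Construction~\ref{con:1} and form an $\ff_q$-basis of $\ff_{q^k}$, so for each $i\in A_2\setminus B_2$ there exist unique $c_{ij}\in\ff_q$ with $\alpha_{0,i}=\sum_{j\in B_2}c_{ij}\alpha_{0,j}$. Because Lemma~\ref{le:skew-linearized} makes the column-valued map $\alpha\mapsto g_j(\alpha)$ $\ff_q$-linear for $j=0,1$, the same $c_{ij}$'s drive the column reduction that transforms $P$ (up to column permutation) into a block upper triangular matrix with diagonal blocks $\begin{pmatrix}G_{0,A_1} & S\end{pmatrix}$ and $G_{0,B_2}$, where $S$ is the $k\times s$ matrix whose $i$th column is $g_1(\tilde{\alpha}_i)$ with $\tilde{\alpha}_i=\alpha_{1,i}-\sum_{j\in B_2}c_{ij}\alpha_{1,j}$. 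Therefore $\det(P)=\pm\det(G_{0,B_2})\cdot\det\begin{pmatrix}G_{0,A_1} & S\end{pmatrix}$, and only the invertibility of the latter remains.

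The ``concatenated independence'' of $\beta_i=(\alpha_{1,i},\alpha_{0,i})=(1,\lambda_i,\ldots,\lambda_i^{2k-1})$ now enters twice. From $(\tilde{\alpha}_i,0)=\beta_i-\sum_{j\in B_2}c_{ij}\beta_j$ and the $\ff_q$-independence of the $k+1\leq 2k$ vectors $\{\beta_j:j\in B_2\cup\{i\}\}$, each $\tilde{\alpha}_i$ is nonzero; and the $\ff_q$-independence of $\{\beta_j:j\in A_2\}$ (using $|A_2|=k+s\leq 2k$) forces $\{\tilde{\alpha}_i:i\in A_2\setminus B_2\}$ to be $\ff_q$-linearly independent. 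Suppose a row vector $u=(u_0,\ldots,u_{k-1})$ annihilates $\begin{pmatrix}G_{0,A_1} & S\end{pmatrix}$ and associate the skew polynomial $u(x)=\sum_i u_i x^i\in\ff_{q^k}[x;\sigma]$. A short computation from Lemma~\ref{le:eval} gives $u\cdot g_j(\alpha)=u({}^\alpha\gamma_j)\cdot\alpha$, so the roots of $u(x)$ include $\{{}^{\alpha_{0,i}}\gamma_0:i\in A_1\}\subset C_{\gamma_0}^{\sigma}$ and $\{{}^{\tilde{\alpha}_i}\gamma_1:i\in A_2\setminus B_2\}\subset C_{\gamma_1}^{\sigma}$, two distinct conjugacy classes since $q\geq 3$. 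Applying Theorem~\ref{thm:skew-dim} together with the $\ff_q$-independence proved above yields $(k-s)+s\leq\deg u<k$, a contradiction; hence $u=0$.

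The main obstacle is the block-triangular reduction: recognizing that the coefficients $c_{ij}$ solving the linear relation among the $\alpha_{0,j}$'s must lie in $\ff_q$ (so that the $\ff_q$-linearity of evaluation from Lemma~\ref{le:skew-linearized} applies), and that the same relations acting on the $\alpha_{1,\cdot}$'s produce nonzero $\tilde{\alpha}_i$'s supplying precisely $s$ new $\ff_q$-independent directions in the conjugacy class $C_{\gamma_1}^{\sigma}$, perfectly complementing the $k-s$ directions from $A_1$ in $C_{\gamma_0}^{\sigma}$. Once this structural matching is in place, Theorem~\ref{thm:skew-dim} closes the argument. The rate-greater-than-half case then follows from Theorem~\ref{thm:mdp-dual}.
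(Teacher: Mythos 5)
Your proposal is correct and follows essentially the same route as the paper: reduce to the MDP property of $G_1^c$ via Lemma~\ref{le:mdp-generic}, column-reduce $P$ to block-triangular form using the $\ff_q$-linearity from Lemma~\ref{le:skew-linearized}, and exploit the concatenated independence of the $\beta_i$'s to get $\ff_q$-independence of the new exponents $\tilde{\alpha}_i$ (the paper's $\mu_l$'s). The only cosmetic differences are that you obtain this independence directly from the identity $(\tilde{\alpha}_i,0)=\beta_i-\sum_j c_{ij}\beta_j$ rather than via the auxiliary moment matrix $V$, and you finish with a left-kernel root-counting argument through Lemma~\ref{le:eval} and Theorem~\ref{thm:skew-dim} instead of citing Theorem~\ref{thm:rank}, which encapsulates the same fact.
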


\begin{proof}
By Proposition~\ref{prop:minimal}, $G(D)$ is minimal, and thus the degree of $\cC$ is $\delta=k$. Since $n>2k$, we have $L=\lfloor\frac{\delta}{k}\rfloor+\lfloor\frac{\delta}{n-k}\rfloor=1$. Note that $G(D)$ has row degree $\nu_i=1=\delta/k$ for all $i$. Therefore, by Lemma~\ref{le:mdp-generic}, for the code $\cC$ to be MDP, it suffices to show that $G_1^c$ has the MDP property. 

Let $A_{i}\subset\{1,\ldots,n\},i=0,1$ be such that $|A_0|\leq k$ and $|A_0|+|A_1|=2k$. Let $P$ be the $2k\times 2k$ matrix formed by columns of $G_1^c$ with indices in the set $A_{0}\cup(A_{1}+n)$ where $A_{1}+n$ means the set obtained by adding every element of $A_{1}$ by the integer $n$. 
For clarity, let us write the matrix $P$ explicitly as
\begin{align*}
	P=\begin{pmatrix}
		G_{0,A_0} & G_{1,A_1} \\
		0         & G_{0,A_1}
	\end{pmatrix},
\end{align*} where $G_{i,A_j}$ is the submatrix of $G_i$ with column indices in $A_j$.

We would like to show that the matrix $P$ has full rank. Note that if $|A_0|=|A_1|=k$, then $\rank P=2k$ since by Proposition~\ref{prop:skew-vandermonde-full-rank}, $G_{0,A_0},G_{1,A_1},G_{0,A_1}$ are $k\times k$ matrices of full rank.
Assume $|A_0|<k$ and $|A_1|>k$. Let $B_1$ be a $k$-subset of $A_1$. By Proposition~\ref{prop:skew-vandermonde-full-rank}, we have $\rank G_{0,B_1}=k$. Furthermore, since $\{\alpha_{0,i}\mid i\in B_1\}$ is linearly independent over $\ff_q$, by Lemma~\ref{le:skew-linearized}, there exists a $k\times (|A_1|-k)$ matrix $T$ over $\ff_q$ such that $G_{0,A_1\setminus B_1}=G_{0,B_1}T$.
It follows that $P$ can be transformed by elementary column operations to 
\begin{align*}
	\begin{pmatrix}
		G_{0,A_0} & {S} & G_{1,B_1} \\
		0         & 0   & G_{0,B_1}
	\end{pmatrix},
\end{align*} where ${S}=G_{1,A_1\setminus B_1}-G_{1,B_1}T$ is an $k\times (|A_1|-k)$ matrix over $\ff_{q^k}$. So the problem of establishing the non-singularity of $P$ boils down to showing the matrix $\begin{pmatrix}
G_{0,A_0} & {S}
\end{pmatrix}$ has full rank.

By Lemma~\ref{le:skew-linearized}, since $T$ is a matrix over $\ff_q$, the matrix $S$ is also a skew Vandermonde matrix where each column is scaled by some nonzero element in $\ff_{q^k}$. More precisely, setting $k_1 = |A_1|-k$, we have
\begin{align}
	S=V_k(\{{}^{\mu_1}\gamma,\ldots,{}^{\mu_{k_1}}\gamma\})\begin{pmatrix}
		\mu_1 & & \\
		& \ddots & \\
		& & \mu_{k_1}
	\end{pmatrix},\label{eq:col-op}
\end{align} for some $\mu_1,\ldots,\mu_{k_1}$ that will be made explicit shortly. Observe that if $\mu_1,\ldots,\mu_{k_1}$ are linearly independent over $\ff_q$, then $\rank S=k_1$. Moreover, by Proposition~\ref{prop:skew-vandermonde-full-rank}, we also have $\rank G_{0,A_0}=|A_0|$. As a consequence, it would follow from Theorem~\ref{thm:rank} that the matrix $\begin{pmatrix}
G_{0,A_0} & {S}
\end{pmatrix}$ has full rank.

Next, let us show that $\mu_1,\ldots,\mu_{k_1}$ are indeed linearly independent over $\ff_q$, and this is where the trick of defining $\alpha_{j,i}$ as in \eqref{eq:alpha} comes into play.
Without loss of generality, let us assume $B_1=\{1+k_1,\ldots,k+k_1\}$ and $A_1=\{1,\ldots,k+k_1\}$. 
{
Note that one may view $\beta_i=(\alpha_{1,i},\alpha_{0,i})$ as a length-$(2k)$ vector $(1,\lambda_i,\lambda_i^2,\ldots,\lambda_i^{2k-1})$ over $\ff_q$.
Define the following} $2k\times (k+k_1)$ matrix $V$ over $\ff_q$:
\begin{align*}
	V:=
	\begin{pmatrix}
		1                & 1                & \ldots & 1                      \\
		\lambda_1        & \lambda_2        & \ldots & \lambda_{k+k_1}        \\
		\vdots           & \vdots           & \ddots & \vdots                 \\
		\lambda_1^{2k-1} & \lambda_2^{2k-1} & \ldots & \lambda_{k+k_1}^{2k-1}
	\end{pmatrix}.
\end{align*}
Clearly, we have $\rank V= k+k_1$ as $\lambda_1,\ldots,\lambda_{k+k_1}$ are all distinct. Applying the same elementary column operations that lead to \eqref{eq:col-op}, we have
\begin{align}
	\left(
	\begin{array}{cccc}
	U & \begin{matrix}
		1                      & \ldots & 1                      \\
		\vdots                 & \ddots & \vdots                 \\
		\lambda_{1+k_1}^{k-1} & \ldots & \lambda_{k+k_1}^{k-1}
	\end{matrix}\\
	\midrule
	0 & \begin{matrix}
		\lambda_{1+k_1}^k        & \ldots & \lambda_{k+k_1}^k        \\
		\vdots                 & \ddots & \vdots                 \\
		\lambda_{1+k_1}^{2k-1} & \ldots & \lambda_{k+k_1}^{2k-1}
	\end{matrix}
	\end{array}
	\right) = V \begin{pmatrix}
		I_{k_1} & 0 \\
		-T & I_{k}
	\end{pmatrix},\label{eq:col-op-2}
\end{align} where $U$ is a $k\times k_1$ matrix over $\ff_q$ and $I_s$ is the $s\times s$ identity matrix. Since $V$ has full rank and so does the left-hand side of \eqref{eq:col-op-2}, it follows that $\rank U=k_1$. Let us write $U=\begin{pmatrix}
u_1 & \ldots & u_{k_1}
\end{pmatrix}$ where $u_i$ is the $i$th column of $U$ for $i=1,\ldots,k_1$. By Lemma~\ref{le:skew-linearized}, $u_1,\ldots,u_{k_1}$ are the representation of $\mu_1,\ldots,\mu_{k_1}$ as length-$k$ vectors over $\ff_q$. Thus, $\mu_1,\ldots,\mu_{k_1}$ are linearly independent over $\ff_q$ and $\rank S=k_1$. Furthermore, since $\gamma_j,j=0,1$ belong to distinct conjugacy classes, it follows from Theorem~\ref{thm:rank} that the matrix $\begin{pmatrix}
G_{0,A_0} & {S}
\end{pmatrix}$ has full rank. Hence, $G_1^c$ has the MDP property and the code $\cC$ is an $(n,k,\delta=k)$ MDP convolutional code over $\ff_{q^k}$ with maximum profile length $L=1$.
\end{proof}

The above result gives an explicit construction of MDP convolutional code with rate $k/n<1/2$ and degree $\delta=k$. Applying Theorem~\ref{thm:mdp-dual}, one can obtain MDP convolutional codes with higher rate and the same degree:

\begin{corollary}\label{coro:m1-dual}
	Assume $n>2k$. Let $\cC^\perp$ be the dual code of the $(n,k)$ code $\cC$ defined in Construction~\ref{con:1}.
	Then $\cC^\perp$ is an $(n,n-k,\delta=k)$ MDP convolutional code over $\ff_{q^k}$.
\end{corollary}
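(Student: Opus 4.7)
The plan is to derive Corollary~\ref{coro:m1-dual} as an essentially one-step consequence of Theorem~\ref{thm:m1} combined with the duality statement Theorem~\ref{thm:mdp-dual}. By Theorem~\ref{thm:m1}, under the assumption $n > 2k$, the code $\cC$ produced by Construction~\ref{con:1} is an $(n,k,\delta{=}k)$ MDP convolutional code over $\ff_{q^k}$. Its generator matrix $G(D)$ can then be regarded as a parity-check matrix of the dual code $\cC^\perp$, which is exactly the hypothesis required to invoke Theorem~\ref{thm:mdp-dual}.

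Applying Theorem~\ref{thm:mdp-dual} immediately gives that $\cC^\perp$ is an $(n,n-k,\delta{=}k)$ MDP convolutional code over $\ff_{q^k}$, with the degree preserved. So the proof reduces to writing this chain of implications out, with a pointer back to Theorem~\ref{thm:m1} for the MDP property of $\cC$.

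The only item worth checking explicitly is that the maximum profile length of $\cC^\perp$ is still $L=1$, since that is the feature advertised for the construction. Using the formula $L = \lfloor \delta/(n-k)\rfloor + \lfloor \delta/k\rfloor$ from Definition~\ref{def:sMDS-MDP} applied to the dual (which has information rank $n-k$), with $\delta = k$ and $n-k > k$ (by the assumption $n > 2k$), one computes $L = \lfloor k/(n-k)\rfloor + \lfloor k/k\rfloor = 0 + 1 = 1$. This matches the profile length $L=1$ enjoyed by $\cC$ itself.

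There is no real obstacle here: all the substantive work has been done in Theorem~\ref{thm:m1}, where the MDP property of $G_1^c$ was established via the skew-Vandermonde and ``concatenated independence'' arguments, and in the pre-existing duality theorem. The proof is thus expected to be two or three lines invoking Theorems~\ref{thm:m1} and~\ref{thm:mdp-dual} and noting the $L=1$ computation above.
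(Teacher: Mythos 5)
Your proposal is correct and matches the paper's (implicit) argument exactly: the corollary is obtained by applying Theorem~\ref{thm:mdp-dual} to the code of Theorem~\ref{thm:m1}, with the degree preserved by that duality theorem. Your extra check that the dual's profile length is $\lfloor k/(n-k)\rfloor+\lfloor k/k\rfloor=1$ under $n>2k$ is a correct and harmless addition.
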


For a given rate $R=k/n$, it is often desirable to construct convolutional codes with smaller degree, which results in better decoding complexity. Combining Theorem~\ref{thm:m1} and Corollary~\ref{coro:m1-dual}, we have the following result.

\begin{corollary}\label{coro:m1-k}
	Assume $n\neq 2k$. There exists a family of $(n,k,\delta=\min\{k,n-k\})$ MDP convolutional codes that can be constructed explicitly over a finite field of size $\Theta(n^{\delta})$. 
\end{corollary}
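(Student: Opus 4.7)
The plan is to prove Corollary~\ref{coro:m1-k} by a simple case analysis based on the sign of $k-(n-k)$, using the hypothesis $n\neq 2k$ to rule out the boundary case. The two cases are handled by invoking Theorem~\ref{thm:m1} and Corollary~\ref{coro:m1-dual}, respectively, so essentially no new work is required beyond packaging.

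First I would handle the case $n>2k$, where $\delta=\min\{k,n-k\}=k$. Theorem~\ref{thm:m1} applies directly to Construction~\ref{con:1} and yields an explicit $(n,k,\delta=k)$ MDP convolutional code over $\ff_{q^k}$ for any prime power $q\geq\max\{3,n\}$. By Bertrand's postulate I can pick a prime $q\in[n,2n)$, and then the field size satisfies $n^k\leq q^k<(2n)^k$, i.e., it is $\Theta(n^k)=\Theta(n^\delta)$ with hidden constants depending on $k$.

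Next I would handle the case $n<2k$, where $\delta=\min\{k,n-k\}=n-k$. Setting $k'=n-k$, we have $n>2k'$, so Construction~\ref{con:1} applied with parameters $(n,k')$ produces, via Theorem~\ref{thm:m1}, an $(n,k',\delta=k')$ MDP convolutional code $\cC$ over $\ff_{q^{k'}}$. Corollary~\ref{coro:m1-dual} (which is itself an immediate consequence of Theorem~\ref{thm:mdp-dual}) then gives that the dual code $\cC^\perp$ is an $(n,n-k',\delta=k')=(n,k,\delta=n-k)$ MDP convolutional code over the same field $\ff_{q^{n-k}}$, whose size is $\Theta(n^{n-k})=\Theta(n^\delta)$ by the same choice of $q$.

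Since $n\neq 2k$ forces exactly one of these two cases, the result follows. The only point requiring a moment of attention is verifying that a prime power of size $\Theta(n)$ is always available, which follows from Bertrand's postulate; no genuine obstacle arises in the argument.
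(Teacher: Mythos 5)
Your proof is correct and matches the paper's approach: the paper obtains this corollary exactly by combining Theorem~\ref{thm:m1} (the case $n>2k$, $\delta=k$) with Corollary~\ref{coro:m1-dual} (dualizing to cover $n<2k$, $\delta=n-k$), and your Bertrand's-postulate remark just makes explicit the routine choice of a prime power $q=\Theta(n)$.
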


\begin{remark}
	Note that in the case of $n>2k$, by Corollary~\ref{coro:m1-k}, we can construct a family of $(n,k,n-k)$ MDP convolutional codes with $L=1$. At the same time, since $\delta=n-k$, by Definition~\ref{def:sMDS-MDP}, we have $M=L=1$, and thus the $(n,k,n-k)$ MDP convolutional codes are also strongly-MDS convolutional codes.
\end{remark}

\section{Open problems}
In this paper we have presented the first nontrivial lower bound for $(n,k)$ MDP convolutional codes with maximum profile length $L\geq 1$. The bound suggests that a finite field of size at least polynomial in $n$ is inevitable for constructing $(n,k)$ MDP convolutional codes with maximum profile length $L\geq 3$. 
Additionally, we have also presented an explicit construction of $(n,k)$ MDP convolutional code with maximum profile length $L=1$ over a finite field of size $O(n^{\min\{k,n-k\}})$, improving upon the previous constructions with the same profile length.
However, there is an enormous gap between the lower bound and the upper bounds implied by known constructions (see Table~\ref{tab:comp}). Closing this gap is an interesting direction for future research. 

We note that our proof for the lower bound only examines a subset of all full-size minors that are required to be nonzero by the MDP property. It is natural to expect that one could further improve the lower bound by fully utilizing the MDP property.
Improving the upper bounds is also of great interest. Thus far, constructions of MDP convolutional codes with general parameters all require a finite field of size at least exponential in $n$. Are there explicit constructions of MDP convolutional codes with general parameters over a field of size polynomial in $n$, or even linear in $n$ for the case $L=1,2$? 

\section*{Acknowledgements}
The author is thankful to the anonymous reviewers for their careful reading of the manuscript and their insightful comments and suggestions.

\appendices
\addtocontents{toc}{\protect\setcounter{tocdepth}{0}}
{
\section{Proof of proposition~\ref{prop:s}}\label{app:a}
\begin{proof}
	We prove by induction on $i$. For $i=1,\ldots,j$, denote by $U_i$ the $ki\times(ki+\sum_{s=1}^{i}k_s)$ matrix
	\begin{align*}
		\begin{pmatrix}
			G_{0,A_{1}} & G_{1,A_{2}}   & \dots  & G_{i-1,A_i} \\
			& G_{0,A_{2}}   & \dots  & G_{i-2,A_i} \\
			&                 & \ddots & \vdots      \\
			&                 &        & G_{0,A_i}
		\end{pmatrix}.
	\end{align*}
	
	Let us first establish the induction basis. Note that $U_1=G_{0,A_1}$. Since $\rank G_{0,B_1}=k$, there exists a $(k+k_1)\times k_1$ matrix $T_1$ of full rank such that 
	\begin{align*}
		\begin{pmatrix}
			G_{1,A_1}\\
			U_1
		\end{pmatrix} T_1
		=
		\begin{pmatrix}
			S_1\\
			0
		\end{pmatrix}.
	\end{align*}
	Similarly, there exists a $(k+k_1)\times k_1$ matrix $T_1'$ of full rank such that 
	\begin{align*}
		\begin{pmatrix}
			G_{1,A_1}\\
			U_1
		\end{pmatrix} T_1'
		=
		\begin{pmatrix}
			S_1'\\
			0
		\end{pmatrix}.
	\end{align*}
	Note that $\rank U_1=k$ and $\dim_{\ff_q} \ker U_1= k+k_1-\rank U_1=k_1$. Thus, $\Span_{\ff_q} T_1 =\Span_{\ff_q} T_1' = \ker U_1$. Therefore, $T_1$ and $T_1'$ form two bases for $\ker U_1$, and there exists a $k_1\times k_1$ invertible matrix $V_1$ such that $T_1=T_1'V_1$. It follows that $S_1=S_1'V_1$. Hence, $\Span_{\ff_q} S_1=\Span S_1'$.
	
	Now let $1 < i \leq j$ and assume $\Span_{\ff_q} S_a=\Span S_a'$ for $a=1,\ldots,i-1$.
	Since $\rank G_{0,B_1}=\ldots=\rank G_{0,B_i}=k$, there exists a $(ki+\sum_{s=1}^{i}k_s)\times \sum_{s=1}^{i}k_s$ matrix $T_i$ of full rank such that 
	\begin{align*}
		\begin{pmatrix}
			\begin{matrix}
				G_{1,A_{1}} & G_{2,A_{2}}   & \dots  & G_{i,A_i}
			\end{matrix}\\
			\begin{matrix}
				U_i
			\end{matrix}
		\end{pmatrix} T_i
		=\begin{pmatrix}
			\begin{matrix}
				S_1 & S_2 & \dots  & S_i    
			\end{matrix}\\
			\begin{matrix}
				0
			\end{matrix}
		\end{pmatrix}.
	\end{align*}
	Similarly, there exists a $(ki+\sum_{s=1}^{i}k_s)\times \sum_{s=1}^{i}k_s$ matrix $T_i'$ of full rank such that 
	\begin{align*}
		\begin{pmatrix}
			\begin{matrix}
				G_{1,A_{1}} & G_{2,A_{2}}   & \dots  & G_{i,A_i}
			\end{matrix}\\
			\begin{matrix}
				U_i
			\end{matrix}
		\end{pmatrix} T_i'
		=\begin{pmatrix}
			\begin{matrix}
				S_1' & S_2' & \dots  & S_i'     
			\end{matrix}\\
			\begin{matrix}
				0
			\end{matrix}
		\end{pmatrix}.
	\end{align*}
	Since $\rank U_i=ki$, we have $\dim_{\ff_q} \ker U_i= ki+\sum_{s=1}^{i}k_s-\rank U_i=\sum_{s=1}^{i}k_s$. Thus, $\Span_{\ff_q} T_i =\Span_{\ff_q} T_i' = \ker U_i$. It follows that there exists an invertible matrix $V_i$ such that $ 
	\begin{pmatrix}
		S_1 & S_2 & \dots & S_i
	\end{pmatrix}
	=
	\begin{pmatrix}
		S_1' & S_2' & \dots & S_i'
	\end{pmatrix} V_i
	$.
	Furthermore, we have $\Span_{\ff_q} \begin{pmatrix}
		S_1 & S_2 & \dots & S_i
	\end{pmatrix}=\Span_{\ff_q} \begin{pmatrix}
		S_1' & S_2' & \dots & S_i'
	\end{pmatrix}$. Since $\bigoplus_{a=1}^i\Span_{\ff_q}S_a=\bigoplus_{a=1}^i\Span_{\ff_q}S_a'$, by the induction hypothesis, we obtain $\Span_{\ff_q} S_i=\Span_{\ff_q} S_i'$. Hence, we conclude by induction that $\Span_{\ff_q} S_i=\Span_{\ff_q} S_i'$ for $i=1,\ldots,j$. 
\end{proof}
}

\bibliographystyle{IEEEtran}
\bibliography{MDP}

\end{document}